\newtheorem{theorem}{Theorem}
\newtheorem{lem}[theorem]{Lemma}
\newtheorem{pro}[theorem]{Proposition}
\newtheorem{Definition}{Definition}
\newtheorem{cor}[theorem]{Corollary}
\newtheorem{Remark}{Remark}
\newcommand{\Sp}{\operatorname{Sp}}
\newcommand{\Co}{\operatorname{Co}}
\newcommand{\bb}{\mathbb}
\renewcommand{\cal}{\mathcal}
\title{\LARGE \bf
Weighted Consensus with Linear Objective Maps}
\author{Xudong Chen, M.-A. Belabbas and Tamer Ba\c sar
\thanks{Xudong Chen, M.-A. Belabbas and Tamer Ba\c sar are with Department of Electrical and Computer Engineering and Coordinated Science Laboratory, University of Illinois at Urbana-Champaign, emails:
        \{xdchen, belabbas, basar1\}@illinois.edu}}
\begin{document}

\maketitle
\thispagestyle{empty}
\pagestyle{empty}

\begin{abstract}  
A consensus system is a linear multi-agent system in which agents communicate to reach a so-called consensus state, defined as the average of the initial states of the agents. Consider a more generalized situation in which each agent is given a positive weight and the consensus state is defined as the weighted average of the initial conditions. We characterize in this paper the weighted averages that can be evaluated in a decentralized way by agents communicating over a directed graph. Specifically,  we introduce  a linear function, called the objective map, that defines the desired final state as a function of the initial states of the agents. We then provide a complete answer to the question of whether there is a decentralized consensus dynamics over a given digraph which converges to the final state specified by an objective map.  In particular, we characterize not only the set of  objective maps that are feasible for a given  digraph, but also  the consensus dynamics that implements the  objective map. In addition, we present a decentralized algorithm to design the consensus dynamics.
\end{abstract}

\section{Introduction}
Consensus algorithms have been recognized  as an important step in a variety of decentralized and distributed algorithms, such as the rendezvous problem, distributed convex optimization or distributed sensing.
We pose and solve in this paper what we term the \emph{weighted consensus} problem over a  directed graph. Specifically, given a set of positive weights assigned to the agents, we say that the agents reach a weighted consensus if they converge to the weighted average of their initial conditions -- a formal definition to be given shortly.   As is commonly done, we assume that the information flow in the system is described by a directed graph. Our goal is to determine which weighted averages can be computed for a given information flow.  Furthermore, we  describe how the agents can communicate over the graph to \emph{design} the dynamical {system} whose evolution reaches the desired weighted consensus.  Computing a weighted average rather than a simple average is a natural one when the agents in the system are not all on equal footing. For example, think of a rendezvous problem where the rendezvous position depends on the  initial positions of only a small group of agents; of distributed sensing, where the weighting can be proportional to the accuracy of the sensing device, or of opinion dynamics, where participants may have different levels of influence on the decision process. 
Because of their broad relevance, a fair amount is already known about consensus algorithms. Indeed, questions concerning sufficient and/or necessary conditions for agents to reach consensus (\cite{JT1,AJ,RM1,LM,RW,SM,JT2,LJ3,CM}), questions  
concerning time delay (\cite{RM1},~\cite{CM}), consensus with quantized measurements (\cite{TB1},~\cite{TB2}), consensus with time varying network topologies (\cite{JT1,AJ,RM1,LM,RW,SM,JT2,LJ3,CM}), and questions about  convergence rate (\cite{LJ1,LJ2,TB2,TB5}), robustness (\cite{KBT},\cite{ET}) in the presence of an adversary  have all been treated to some degree.

Broadly speaking, the problem we address in this paper is one of \emph{feasibility of an objective given decentralization constraints}. Similar questions, but involving controllability of linear systems~\cite{lin_structural},  stability of linear systems~\cite{belabbas_sparse} and formation control~\cite{lorenzen_belabbas_ecc2014} have also been investigated. While the general problem of feasibility of an objective under decentralization constraints is far from being completely understood, we shall see that a fairly complete characterization can be obtained in the present case, in the cases of both discrete- and continuous-time dynamics. However, still open questions remain, such as: How to handle negative weights? 
How to handle time-varying information flow graphs? How to make sure that no-agent can ``game the system" and increase or decrease its assigned weight?

We next describe the model  precisely. We assume that there are $n$ agents $ x_1,\ldots,  x_n$ evolving in $\mathbb{R}^d$, and that the underlying network topology is specified by  a directed graph (or simply {\it digraph}) $G = (V,E)$, with $V = \{1,\ldots, n\}$ the set of vertices and $E$ the set of edges. We let $V^-_i$ be a subset of $V$ comprised of the outgoing neighbors of vertex $i$, i.e., 
\begin{equation*}
V^-_i := \{j\in V\mid i\to j\in E\}
\end{equation*}
and we assume in this paper that each agent $ x_i$ can only observe its outgoing neighbors.  The equations of motion for the $n$ agents $ x_1,\ldots, x_n$ are then given by
\begin{equation}\label{MODEL}
\frac{d}{dt}{ x}_i = \sum_{j\in V^-_i}a_{ij} \cdot ( x_j -  x_i), \hspace{10pt} \forall i=1,\ldots,n 
\end{equation}
with each $a_{ij}$ a non-negative real number, which we call the interaction weight.

The objective of the system is characterized by positive real numbers $w_i$. We define the \emph{objective function} $f: \mathbb{R}^{n\times d}\to \mathbb{R}^d$ as:
\begin{equation*}
f( x_1,\ldots, x_n) := \sum^n_{i=1}w_i x_i.
\end{equation*}  

The feasibility question we ask is the following: given a digraph $G=(V,E)$, and  a weight vector $ w = (w_1,\ldots, w_n)$ in $\mathbb{R}^n$, does there exist a set of non-negative interaction weights $\{a_{ij}\mid i\to j\in E\}$ such that for any initial condition $ x_1(0),\ldots, x_n(0)$ in $\mathbb{R}^d$, all agents will converge to the same point in $\mathbb{R}^d$ specified by the objective map, i.e.,  
\begin{equation*}
\lim_{t\to \infty} x_i(t) =   f\big ( x_1(0),\ldots,  x_n(0)\big )
\end{equation*}
for all $i=1,\ldots, n$. In other words, we require that  all the agents not only reach consensus, but also converge to a specific point which is a weighted sum of the initial positions of the agents. 
In the following section, we will convert this problem to one  of asking whether there exists a sparse, infinitesimal stochastic matrix $A$ with a fixed zero pattern (specified by the digraph) such that  $A$ has a simple zero eigenvalue with the vector $ w$ being the corresponding left eigenvector.

 
In the paper, we will provide a complete answer to the question of weighted consensus within model~\eqref{MODEL}. In particular, we will characterize not only the set of  objective maps which are feasible by choices of interaction weights, but also the set of interaction weights for a feasible objective map. Note that the problem of evaluating averages in a \emph{distributed} manner has also been handled using discrete-time dynamics~\cite{DLI1,DLI2,DLI3,HT}. 

The paper is organized as follows. In section~\ref{DefThm}, we introduce some  definitions and  state the main results of the paper. Precisely, the main theorem characterizes the set of objective maps that can be realized over a given network topology.  Section III is devoted to the proof of the main theorem. In Section VI, we present a decentralized algorithm for finding a set of interaction weights associated with a feasible objective map. In particular, we relate the set of interaction weights to solutions of graph balancing. We provide conclusions in the last section.  The paper ends with an Appendix.

\section{Definitions, Problem Reformulation and \\The Main Theorem}\label{DefThm}
In this section, we  introduce the main definitions used in this work, formulate the  weighted consensus problem in precise terms,  and state the main result of the paper.

\subsection{Background and Notation}

We consider in this paper only  \emph{simple}  directed graphs, that is directed graphs with no self loops, and with at most one edge between each ordered pair of vertices. We denote by $G = (V,E)$  a directed graph where $V$ is the vertex set and $E$ the edge set. We denote by $i\to j$ an edge of $G$, with $i$ and $j$ the start-vertex and the end-vertex of the edge, respectively.  A vertex $r$ is said to be a {\bf root} of $G$ if for all $i \in V$, there is a path from $i$ to $r$ . We say that  $G$ is {\bf rooted} if it contains a root.   Graphs with only one vertex are by convention rooted. We denote by $V_r \subset V $ the set of all roots of $G$. The digraph $G$ is {\it strongly connected} if, for any ordered pair of vertices $(i,j)$, there is a path from $i$ to $j$. In this case, all vertices of $G$ are roots, i.e.,  $V_r = V$.     It is well known that if the  digraph $G$ associated with system~\eqref{MODEL} is rooted,  then all agents  converge to the same state  for all initial conditions (see, for example, \cite{RW}). Conversely, if for any initial condition, all agents of system~\eqref{MODEL} converge to the same state, then the underlying digraph must be rooted. Hence, we only consider rooted digraphs as the underlying digraphs of system~\eqref{MODEL}. For a subset $V' \subset V$, we call $G'$ a subgraph of $G$ \emph{induced} by $V'$ if $G' = (V',E')$ and $E'$ contains any edge of  $E$ whose start-vertex and end-vertex are in $V'$. We have the following definition:

\begin{Definition}[Relevant Subsets]\label{def:relsub}
Let $G = (V,E)$ be a rooted digraph, and $V'$ be a subset of $V$. We say $V'$ is {\bf relevant to $G$} (or simply {\bf relevant})   if it satisfies the two conditions:    

\begin{enumerate}
\item[a).] The set $V'$ is contained in the root set $V_r$;

\item[b).] The subgraph $G'$ induced by $V'$ is strongly connected. 
\end{enumerate}
\end{Definition}


For $G$ a digraph with $n$ vertices, we can always let $V=\{1,2,\ldots,n\}$. We denote by $\operatorname{Sp}[V]$  the $(n-1)$-simplex contained in $\mathbb{R}^n$ with vertices the standard basis vectors $e_1, \ldots, e_n \in \mathbb{R}^n$.  For $V' \subset V$, we define similarly $\operatorname{Sp}[V']$ as the convex hull of $\{e_i \mid i \in V'\}$:
$$\operatorname{Sp}[V'] := \left\{ \sum_{i \in V'} \alpha_i e_i \mid \alpha_i \geq 0, \sum_{i \in V'} \alpha_i = 1 \right\}.$$
We use the notation $\operatorname{Sp}(V')$ to denote the \emph{interior} of $\operatorname{Sp}[V']$:
$$\operatorname{Sp}(V') := \left\{ \sum_{i \in V'} \alpha_i e_i \mid \alpha_i > 0, \sum_{i \in V'} \alpha_i = 1 \right\}.$$
If $V'$ is comprised of only  one vertex, say vertex $i$,  we then set  $\operatorname{Sp}[V']=\operatorname{Sp}(V')  = \{ e_i\}$. We introduce a similar notation to denote a convex cone. Let $C_i$, for $i=1, \ldots, l$, be vectors in $\bb R^m$; we denote the convex cone spanned by $C_i$ by
$$\Co[C_1,\ldots,C_l] :=  \left\{ \sum_{i=1}^l \alpha_i C_i \mid \alpha_i \geq 0 \right\}.$$ We denote its interior by $$\Co(C_1,\ldots,C_l) :=  \left\{ \sum_{i=1}^l \alpha_i C_i \mid \alpha_i > 0 \right\}.$$


\begin{Definition}[Sparse Infinitesimal Stochastic Matrix]
We say  a matrix $A$ is an {\bf infinitesimal stochastic matrix (ISM)} if its off-diagonal entries are non-negative, and its rows sum to zero. 
Let $G$ be a digraph with $n$ vertices. We let $\mathbb{A}_G$ be the set of $n$-by-$n$ ISMs with the following properties:  $A=(a_{ij})\in \mathbb{A}_{G}$ if, for $i\neq j$, we have 
\begin{equation*}
a_{ij} =\left\{
\begin{array}{ll}
\ge 0 & \text{if } i\to j \text{ is an edge of }G\\
=0 & \text{otherwise}
\end{array}\right.
\end{equation*}
If $G$ consists of only one vertex, then $\mathbb{A}_G = \{0\}$ is a singleton.
\end{Definition}

Let $\mathbf{1}$ be a vector of all ones in $\mathbb{R}^n$; then for each matrix $A$ in $\mathbb{A}_{G}$, we have $A\mathbf{1} = 0$. So each matrix $A$ has at least one zero eigenvalue. Furthermore, it is well known that if $A$ is an ISM, then the real parts of eigenvalues of $A$ are non-positive. In particular, if the digraph $G$ is rooted and  $a_{ij}>0$ for each  $i\to j \in E$, then the matrix $A$ has a simple zero eigenvalue. 




\subsection{Main Results}
We start by formulating the targeted consensus problem in view of the facts introduced above. First, note that we can rewrite~\eqref{MODEL} into a matrix form as follows: Let $X$ be an $n$-by-$d$ matrix with $ x^\top_i$ the $i$-th row of $X$. Then, system~\eqref{MODEL} is equivalent to
\begin{equation}\label{MODELN}
\dot X = A X
\end{equation}  
with matrix $A$ contained in $\mathbb{A}_G$.  For the purpose of reaching consensus, we require that the matrix $A$ have a simple zero eigenvalue. Let $ w\in \operatorname{Sp}[V]$ be the left eigenvector of $A$ corresponding to the zero eigenvalue. Then, for any initial condition $X(0)$, we have 
\begin{equation*}
\lim_{t\to \infty}X(t) =  \mathbf{1} \cdot w^\top X(0)
\end{equation*}
If we write $ w = (w_1,\ldots, w_n)$, then
\begin{equation*}
\lim_{t\to \infty} x_i(t) = \sum^n_{i=1} w_i  x_i(0) 
\end{equation*}
Conversely, if the expression above holds for all initial conditions, then the matrix $A$ must have $w$ as a left eigenvector and zero as a simple eigenvalue. We thus introduce the following subset of $\mathbb{A}_G$:

\begin{Definition}[$w$-Feasible Dynamics]
Let $ w\in \operatorname{Sp}[V]$.  We define the set of {\bf $w$-feasible dynamics} $\mathbb{A}_{G}(w) \subset \mathbb{A}_G$ as the set of ISMs satisfying:  

\begin{enumerate}
\item The matrix $A$ has a simple zero eigenvalue. 
\item The vector $ w$ is the left eigenvector of $A$ corresponding to the zero eigenvalue, i.e., $A^\top w = 0$.
\end{enumerate}

\end{Definition}


So the question we raised in the first section can be restated as follows: For a given digraph $G$ and a vector $w \in \operatorname{Sp}[V]$, is the set $\mathbb{A}_{G}(w)$  empty?   We answer this question in Theorem~\ref{MAIN}:

\begin{theorem}\label{MAIN}
Let $G = (V,E)$ be a rooted digraph, and $V_1,\ldots, V_q$ be the relevant subsets  of $G$. Let $W$ be a subset of $\operatorname{Sp}[V]$ comprised of vectors $ w$ for which $\mathbb{A}_{G}(w)$ is  nonempty. Then, $$W = \cup^q_{i=1}\operatorname{Sp} (V_i).$$
\end{theorem}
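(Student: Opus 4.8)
The plan is to establish the two inclusions $W \subseteq \bigcup_{i=1}^q \operatorname{Sp}(V_i)$ and $\bigcup_{i=1}^q \operatorname{Sp}(V_i) \subseteq W$ separately, after isolating the spectral--combinatorial fact that does the heavy lifting in both. For $A = (a_{ij}) \in \mathbb{A}_G$, call the spanning subgraph of $G$ with edge set $\{i \to j : a_{ij} > 0\}$ the \emph{support digraph} $H(A)$. I would first prove: $A$ has a simple zero eigenvalue if and only if $H(A)$ is rooted, and in that case the left null vector of $A$, normalized to lie in $\operatorname{Sp}[V]$, is supported \emph{exactly} on the root set $V_r(H(A))$ and is strictly positive there. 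For the ``only if'' and the support claim I would pass to $B := A + cI \geq 0$ with $c>0$ large, normalize to the row-stochastic matrix $P := B/c$ (so $P \mathbf{1} = \mathbf{1}$), and use Perron--Frobenius for reducible nonnegative matrices, equivalently the theory of stationary distributions of finite Markov chains: the recurrent classes of $P$ are precisely the sink strongly connected components of $H(A)$; a nonnegative left null vector of $A$ is a stationary distribution of $P$, hence supported on the union of the recurrent classes; and such a vector is unique --- equivalently $0$ is a simple eigenvalue --- exactly when $H(A)$ has a single sink component, which for a finite digraph is the same as being rooted. The ``if'' direction for nonzero support patterns reduces to the case quoted in the excerpt (all edge weights positive) applied to $H(A)$ itself, once one notes $A \in \mathbb{A}_{H(A)}$. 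I expect this lemma to be the main obstacle: the excerpt only quotes simplicity of the zero eigenvalue for the full-support case, and pinning down the support of $w$ for an arbitrary pattern is where the genuine content sits.

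Given the lemma, the inclusion $W \subseteq \bigcup_i \operatorname{Sp}(V_i)$ is quick. Take $w \in W$, pick $A \in \mathbb{A}_G(w)$, and set $H := H(A)$. Simplicity of the zero eigenvalue and the lemma give that $H$ is rooted and $\operatorname{supp}(w) = V_r(H)$. I then check that $V_r(H)$ is relevant to $G$ in the sense of Definition~\ref{def:relsub}: condition (a) because a path in $H$ is a path in $G$, so every root of $H$ is a root of $G$; condition (b) because $V_r(H)$ is the terminal strongly connected component of $H$, hence strongly connected, and adding the extra edges of $G$ can only help. So $V_r(H)$ is one of the $V_i$, and since $w$ is strictly positive on $\operatorname{supp}(w) = V_i$, zero elsewhere, and sums to $1$, we get $w \in \operatorname{Sp}(V_i)$.

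For the reverse inclusion I would argue by explicit construction. Fix a relevant $V'$ and $w \in \operatorname{Sp}(V')$, so $w_i > 0$ on $V'$, $w_i = 0$ off $V'$, and $\sum_i w_i = 1$. On $G[V']$, which is strongly connected by relevance, pick a circulation $c = (c_{ij})$ that is strictly positive on every edge (obtained, e.g., as the sum over edges $e$ of $G[V']$ of the indicator circulations of cycles through $e$), and put $a_{ij} := c_{ij}/w_i$ on the edges of $G[V']$; flow conservation $\sum_j c_{ij} = \sum_k c_{ki}$ then makes the resulting $V' \times V'$ block an ISM whose transpose kills $w|_{V'}$. For each $i \in V \setminus V'$, relevance gives $V' \subseteq V_r(G)$, so $G$ has a path from $i$ into $V'$; I pick one outgoing edge $i \to \sigma(i) \in E$ on a shortest such path, give it an arbitrary positive weight, and zero out the rest of row $i$. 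The resulting $A$ lies in $\mathbb{A}_G$ (only edges of $G$ are used), and its support digraph is $G[V']$ together with the $\sigma$-forest draining into $V'$; hence every vertex reaches $V'$ while $V'$ emits no outside edge, so $H(A)$ is rooted with $V_r(H(A)) = V'$, and the lemma yields a simple zero eigenvalue. Finally $A^\top w = 0$: for $j \in V'$ this is the block identity above (the diagonal entries match because rows indexed by $V'$ are supported in $V'$), and for $j \notin V'$ it holds because $a_{ij} = 0$ whenever $i \in V' = \operatorname{supp}(w)$. Thus $A \in \mathbb{A}_G(w)$ and $w \in W$. The one-vertex case $|V'| = 1$ (where $G[V']$ has no edges, the $V'$-block is $0$, and $w = e_i$) goes through verbatim, consistently with the convention that a single-vertex graph is rooted; a possible alternative to the circulation step is to exploit that $\mathbb{A}_G$ is itself a convex cone and analyze its structure directly, but the construction above seems the most direct route.
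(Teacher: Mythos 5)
Your proposal is correct, and while it follows the same two-inclusion skeleton as the paper (necessity by showing the support of $w$ is a relevant set; sufficiency by an explicit block construction with $A_{12}=0$ and the complement of $V'$ draining into $V'$), the key technical ingredients are genuinely different. For $W \subseteq \bigcup_i \operatorname{Sp}(V_i)$, the paper (Proposition~\ref{PMS}, via Lemmas~\ref{2lem1} and~\ref{2lem2}) uses the block lower-triangular form of $A$ relative to the root set of $G$ together with $\lim_{t\to\infty}\exp(At)=\mathbf{1}\,w^\top$, and then a separate contradiction argument for strong connectivity of $G_w$; you instead prove one sharper support lemma --- zero is a simple eigenvalue of $A$ iff the support digraph $H(A)$ is rooted, and then the normalized left null vector is positive exactly on $V_r(H(A))$ --- via recurrent classes and stationary distributions of the stochastic matrix $(A+cI)/c$, and transfer relevance from $H(A)$ to $G$. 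This subsumes what the paper only records afterwards as a corollary (that $G_A$ is rooted with root set $V_w$), at the cost of leaning on standard but nontrivial Perron--Frobenius facts (e.g., algebraic multiplicity of the eigenvalue $1$ equals the number of recurrent classes), which you should cite or prove for full rigor; the paper's route needs only the ``well known'' full-support fact it quotes. For the reverse inclusion, the paper takes $A_{11}$ from $\mathbb{A}_{G'}(w')$, whose nonemptiness it gets from Proposition~\ref{SC}, itself built on the cycle-cone characterization (Proposition~\ref{CH}) and principal subsets; your positive-circulation construction $a_{ij}=c_{ij}/w_i$ bypasses that apparatus and is essentially the graph-balancing observation the paper states only later in Lemma~\ref{7lem1} (a positive combination of the cycle matrices $C_i$ is exactly such a circulation), so you obtain sufficiency more elementarily. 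Your wiring of the rows outside $V'$ by a shortest-path forest, versus the paper's choice of making every available out-edge positive, is immaterial: both give $V_r(G_A)=V'$ and hence a simple zero eigenvalue. In short, the paper's longer route buys the finer convex-geometric structure of $\mathbb{A}_G(w)$ (Propositions~\ref{4lem1}--\ref{SC}), which it wants for its own sake, while your route buys a shorter, self-contained proof of Theorem~\ref{MAIN} itself.
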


We  now consider briefly the case of  discrete-time dynamics and show that essentially the same result holds. To be precise, we consider the model    
\begin{equation*}
X(k+1) = \bar A X(k)
\end{equation*}
where $\bar A$ is a stochastic matrix (i.e. its entries are all non-negative and the sum of the entries in any given row is one.) 
Of course, the matrix $\bar A$ is also a sparse matrix, with the zero pattern specified by  the rooted graph $\bar  G = (V,E)$. For simplicity,  we  assume that  $\bar G$ has self-loops, i.e., $i\to i\in E$ for all $i\in V$. Thus, each diagonal entry of $A$ is allowed to take a positive value.  Similarly, we let $\bar{\mathbb{A}}_{\bar G}$ be the collection of the sparse stochastic matrices associated with $\bar G$, and let $\bar{\mathbb{A}}_{\bar G}(w)\subset \bar{\mathbb{A}}_{\bar G}$ be  such that each $A\in \bar{\bb{A}}_{\bar G}(w)$ has a simple eigenvalue $1$ with $\bar A^{\top} w =  w$ (note that if $\lambda$ is another eigenvalue of $\bar A$, then $|\lambda|<1$ by the Perron-Frobenius theorem).

\begin{theorem}\label{main2}
Let $\bar{G} = (V,E)$ be a rooted digraph with a self-loop at each vertex. Let $V_1,\ldots, V_q$ be the relevant subsets  of $\bar G$. Let $\overline{W}$ be a subset of $\operatorname{Sp}[V]$ comprised of vectors $ w$ for which $\bar{\mathbb{A}}_{\bar{G}}(w)$ is  nonempty. Then, $$\overline{W} = \cup^q_{i=1}\operatorname{Sp} (V_i).$$

\end{theorem}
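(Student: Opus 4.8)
The plan is to deduce Theorem~\ref{main2} directly from Theorem~\ref{MAIN} via the standard correspondence between (continuous-time) infinitesimal stochastic matrices and (discrete-time) stochastic matrices, $A \leftrightarrow I + \epsilon A$. Given a rooted digraph $\bar G = (V,E)$ with a self-loop at every vertex, let $G = (V, E')$ be the simple digraph obtained from $\bar G$ by deleting all $n$ self-loops. Deleting self-loops affects neither which vertices are roots nor whether any induced subgraph is strongly connected (a self-loop is a trivial cycle, irrelevant for reachability between distinct vertices), so $G$ is again a rooted simple digraph and its relevant subsets are exactly $V_1, \ldots, V_q$. Hence it suffices to prove
\[
\bar{\mathbb{A}}_{\bar G}(w) \neq \emptyset \quad \Longleftrightarrow \quad \mathbb{A}_{G}(w) \neq \emptyset
\]
for every $w \in \operatorname{Sp}[V]$; Theorem~\ref{MAIN} applied to $G$ then yields $\overline{W} = W = \cup_{i=1}^{q} \operatorname{Sp}(V_i)$, as claimed.

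For the implication ``$\Rightarrow$'', I would take $\bar A \in \bar{\mathbb{A}}_{\bar G}(w)$ and set $A := \bar A - I$. One checks that $A$ is an ISM: its off-diagonal entries equal those of $\bar A$ and are nonnegative, while each row sums to $1 - 1 = 0$. Its off-diagonal zero pattern is exactly that prescribed by $G$, since for $i \neq j$ one has $i \to j \in E$ iff $i \to j \in E'$. Because $\operatorname{spec}(A) = \operatorname{spec}(\bar A) - 1$, the simple eigenvalue $1$ of $\bar A$ becomes a simple zero eigenvalue of $A$, and $A^\top w = \bar A^\top w - w = 0$; thus $A \in \mathbb{A}_{G}(w)$. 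For the implication ``$\Leftarrow$'', I would take $A = (a_{ij}) \in \mathbb{A}_{G}(w)$ and set $\bar A := I + \epsilon A$ with $\epsilon > 0$ small enough that $1 + \epsilon a_{ii} > 0$ for all $i$ (any $\epsilon < 1/\max_i(-a_{ii})$ works; the degenerate case $A = 0$ is the one-vertex digraph, which is immediate). Then $\bar A$ has nonnegative entries and unit row sums, hence is stochastic; its off-diagonal support lies in $E'$, and its diagonal entries are allowed to be positive precisely because $\bar G$ has a self-loop at each vertex --- this is the only place the self-loop hypothesis enters. From $\operatorname{spec}(\bar A) = 1 + \epsilon \operatorname{spec}(A)$ one gets that $1$ is a simple eigenvalue of $\bar A$ (and, for $\epsilon$ small, the remaining eigenvalues lie strictly inside the unit disk, consistent with the Perron--Frobenius remark), while $\bar A^\top w = w + \epsilon A^\top w = w$. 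Hence $\bar A \in \bar{\mathbb{A}}_{\bar G}(w)$.

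I do not expect any genuinely new difficulty here beyond Theorem~\ref{MAIN}; the argument is essentially a change of variables. The points needing the most care --- and thus the only (mild) obstacles --- are verifying that the affine bijection $A \mapsto I + \epsilon A$ simultaneously respects the sparsity constraint, the simplicity of the relevant eigenvalue, and the left-eigenvector condition ($A^\top w = 0 \leftrightarrow \bar A^\top w = w$), and checking that passing between $\bar G$ and its self-loop-free core $G$ leaves the list of relevant subsets (and their interiors $\operatorname{Sp}(V_i)$) unchanged. Once these bookkeeping steps are in place, Theorem~\ref{main2} follows.
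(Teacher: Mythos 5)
Your proposal is correct and follows essentially the same route as the paper: the paper's proof also deletes the self-loops to obtain $G$, establishes $\overline{W}=W$ via the correspondences $A\mapsto I+\epsilon A$ (for small $\epsilon>0$) and $\bar A\mapsto \bar A - I$, and then invokes Theorem~\ref{MAIN}. Your write-up merely spells out the bookkeeping (sparsity pattern, eigenvalue shift, left-eigenvector condition, invariance of the relevant subsets) that the paper leaves implicit.
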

\begin{proof}
Let $G$ be the digraph $\bar G$ with self-arcs removed, and let $W$ be the set of $w$ for which $\bb{A}_G(w)$ is non-empty. We prove Theorem~\ref{main2} by showing that $\overline{W}=W$ and then appealing to Theorem~\ref{MAIN}.  We first  pick $w \in W$ and let $A\in \mathbb{A}_{G}(w)$.  For sufficiently small $\epsilon>0$, we have $(I + \epsilon A)\in \bar{\mathbb{A}}_{\bar G}(w)$. Thus, $W \subseteq \overline{W}$. Conversely, let $w \in \overline{W}$ and $\bar A\in \bar{\mathbb{A}}_{\bar G}(w)$, then $(\bar A - I)\in {\mathbb{A}}_{{G}}(w)$. Thus, $\overline{W}\subseteq W$. This completes the proof.
\end{proof}

The next section is devoted to the proof of Theorem~\ref{MAIN} and is organized as follows. In section~\ref{SPMS}, we  focus on the relevant subsets of $V$. In particular, we show that if the set $\mathbb{A}_{G}(w)$ is nonempty, then the vector $ w$ has to be  in the union of $\operatorname{Sp}(V_1),\ldots,\operatorname{Sp}(V_q)$. In section~\ref{3b}, we investigate $w$-feasible dynamics without the requirements that $G$ be rooted and that $A$ has zero as a simple eigenvalue. With this relaxation, we prove that all the relaxed $w$-feasible dynamics form a closed \emph{convex cone}. 
In section~\ref{3c}, we assume that $G$ is strongly connected, and introduce the notion of principal subset to characterize $\mathbb{A}_G(w)$.  In particular, we provide a canonical decomposition of $\mathbb{A}_G(w)$ into disjoint  subsets. Moreover, we show that  the closure of $\bb{A}_G(w)$ is  the closed convex cone of the relaxed $w$-feasible dynamics. We then combine these results and prove Theorem~\ref{MAIN} in section~\ref{3d}.


\section{Relevant Subsets of Vertices, Cycles of Digraphs and Principal Subsets of Cycles}

\subsection{On Relevant Subsets}\label{SPMS}

We derive here some preliminary relations between the set of $w$-feasible dynamics $\mathbb{A}_G(w)$ and the relevant subsets of $G$ introduced in Definition~\ref{def:relsub}. To this end, we set

\begin{equation}\label{eq:defVw}
V_w := \{ i \in V \mid w_i \neq 0 \}.
\end{equation} 
We establish the following result:

\begin{pro}\label{PMS}
Let $G = (V,E)$ be a rooted digraph, and $ w$ be a vector in $\operatorname{Sp}[V]$. If $\mathbb{A}_{G}(w)$ is nonempty, then $V_{w}$ is a  relevant subset of $G$. 
\end{pro}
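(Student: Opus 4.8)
The plan is to extract from the eigenvector equation $A^\top w = 0$ (for some $A \in \mathbb{A}_G(w)$) two structural facts about the support $V_w$ -- namely that it is closed under outgoing edges and that $G[V_w]$ is strongly connected -- and then to combine these with the rootedness of $G$ to conclude that $V_w$ is in fact the root set $V_r$; the latter yields both conditions of Definition~\ref{def:relsub} at once.

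First I would read $A^\top w = 0$ coordinate-wise: $\sum_i a_{ij} w_i = 0$ for every $j$. If $j \notin V_w$ then $w_j = 0$ kills the diagonal term, leaving $\sum_{i \ne j} a_{ij} w_i = 0$, a sum of non-negative terms; hence $a_{ij} = 0$ whenever $i \in V_w$. So no edge leaves $V_w$, i.e.\ $V_w$ is closed. Since $w \in \operatorname{Sp}[V]$ is nonzero, $V_w \ne \emptyset$; as $G$ is rooted, any path from a vertex of $V_w$ to a root stays inside $V_w$, so $V_w$ contains a root, and following paths out of that root then shows $V_r \subseteq V_w$. (I will also record that $V_r$ itself is closed in $G$, since an out-neighbour of a root is again a root.)

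The crux is strong connectivity of $G[V_w]$. Suppose it fails. Then the condensation of $G[V_w]$ has a sink component $S$: a proper nonempty subset of $V_w$ from which no edge leaves inside $G[V_w]$, hence (using that $V_w$ is closed in $G$) no edge leaves in $G$ either, so $S$ is closed in $G$. Summing the relations $\sum_i a_{ij} w_i = 0$ over $j \in S$ and using that the rows indexed by $S$ are supported on $S$, the terms with $i \in S$ cancel and one is left with $\sum_{i \notin S} w_i \big( \sum_{j \in S} a_{ij} \big) = 0$; non-negativity forces $a_{ij} = 0$ for every $i$ with $w_i > 0$ and every $j \in S$. In particular no vertex of $V_w \setminus S$ has an edge into $S$, so $V_w \setminus S$ is closed in $G$ as well. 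Now, ordering the vertices as $S$, then $V_w \setminus S$, then the remaining vertices, $A$ becomes block lower-triangular whose first two diagonal blocks are themselves infinitesimal stochastic matrices; each contributes an eigenvalue $0$, so $0$ is an eigenvalue of $A$ of algebraic multiplicity at least two -- contradicting $A \in \mathbb{A}_G(w)$. Hence $G[V_w]$ is strongly connected. Finally, $V_r$ is a nonempty closed subset of the strongly connected graph $G[V_w]$, and a strongly connected graph has no proper nonempty closed subset, so $V_r = V_w$; thus $V_w \subseteq V_r$ and $G[V_w] = G[V_r]$ is strongly connected, i.e.\ $V_w$ is relevant.

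I expect the strong-connectivity step to be the main obstacle: one must use the positivity of $w$ on its support together with $A^\top w = 0$ to rule out the ``one-way coupling'' configurations that mere rootedness would permit, and then separately invoke the simple-zero-eigenvalue hypothesis to derive a contradiction from two disjoint closed subsets. A minor technical point to get right is verifying that the diagonal blocks in the block-triangular decomposition really are infinitesimal stochastic, which is exactly where the closedness of the subsets involved enters.
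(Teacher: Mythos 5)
Your deduction from $A^\top w = 0$ only shows that $a_{ij}=0$ whenever $w_i>0$ and $w_j=0$, i.e.\ that no edge of the \emph{support graph} $G_A$ (edges with $a_{ij}>0$) leaves $V_w$. It does not make $V_w$ closed in $G$: an edge $i\to j\in E$ with $i\in V_w$, $j\notin V_w$ may perfectly well exist and simply carry weight $a_{ij}=0$. Because of this conflation, the steps ``any path from a vertex of $V_w$ to a root stays inside $V_w$,'' ``$V_r\subseteq V_w$,'' and the concluding ``$V_r=V_w$'' are not only unjustified but false. Concretely, let $G$ be the complete digraph on $\{1,2,3\}$, $w=(1/2,1/2,0)$, and let $A$ be the ISM with $a_{12}=a_{21}=a_{31}=1$ and all other off-diagonal entries zero. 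Then $A^\top w=0$ and the support graph is rooted with root set $\{1,2\}$, so zero is a simple eigenvalue and $A\in\mathbb{A}_G(w)$; yet $V_w=\{1,2\}$ while $V_r=\{1,2,3\}$. So your route to condition (a), namely $V_w\subseteq V_r$, collapses; this is exactly the part the paper handles in Lemma~\ref{2lem1}, by ordering the vertices so that $V_r$ comes first, observing that every $A\in\mathbb{A}_G$ is then block lower-triangular, and reading off from $\lim_{t\to\infty}\exp(At)=\mathbf{1}\,w^\top$ that $w_i=0$ for every $i\notin V_r$.

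Your strong-connectivity step is essentially salvageable once every occurrence of ``closed in $G$'' is replaced by ``closed in $G_A$'': a sink strongly connected component $S$ of $G_A$ restricted to $V_w$ is closed in $G_A$, summing the column relations over $j\in S$ then kills all $a_{ij}$ with $i\in V_w\setminus S$ and $j\in S$, and the two disjoint nonempty closed sets $S$ and $V_w\setminus S$ produce two diagonal ISM blocks, hence a zero eigenvalue of multiplicity at least two, contradicting simplicity; since the support graph restricted to $V_w$ is a subgraph of $G_w$ on the same vertices, $G_w$ is strongly connected. This is a legitimate alternative to the paper's Lemma~\ref{2lem2}, which instead notes $A_{12}=0$, so the $V_w$-block lies in $\mathbb{A}_{G'}(w')$ with $w'$ strictly positive, and reuses Lemma~\ref{2lem1}. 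But condition (a) still requires its own argument (for instance: simplicity of the zero eigenvalue forces $G_A$ to be rooted; closedness of $V_w$ in $G_A$ puts the root set of $G_A$ inside $V_w$; strong connectivity of $G_A$ on $V_w$ then makes every vertex of $V_w$ a root of $G_A$, hence of $G$). As written, the proposal has a genuine gap.
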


The proof of Proposition~\ref{PMS} proceeds by  first showing that the subset $V_w$ is contained in the root set $V_r$ of $G$, and then  showing that the subgraph $G_{ w}$, derived by restricting $G$ to $V_{ w}$, is strongly connected. This is done in Lemmas~\ref{2lem1} and~\ref{2lem2} below.

\begin{lem}\label{2lem1}
Let $G$ be a rooted digraph, and $ w$ be a vector in $\operatorname{Sp}[V]$. If $\mathbb{A}_{G}(w)$ is nonempty, then $V_{ w}$ is a subset of $V_r$. 
\end{lem}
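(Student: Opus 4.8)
The plan is to argue at the level of individual vertices and prove the contrapositive: if a vertex $k$ is \emph{not} a root of $G$, then $w_k=0$; equivalently, every vertex with nonzero weight lies in $V_r$. Fix any $A\in\mathbb{A}_G(w)$ (nonempty by hypothesis). Recall from the discussion preceding the definition of $\mathbb{A}_G(w)$ that, since $A$ has a simple zero eigenvalue with left eigenvector $w\in\operatorname{Sp}[V]$, the solution of $\dot X = AX$ satisfies $\lim_{t\to\infty}X(t)=\mathbf{1}\,w^\top X(0)$ for every $X(0)$; in particular all agents converge, to the common value $w^\top X(0)=\sum_i w_i x_i(0)$.

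Now suppose, for contradiction, that some $k$ with $w_k>0$ is not a root. Let $R_k:=\{\,j\in V\mid \text{there is a path from } j \text{ to } k \text{ in } G\,\}$, with $k\in R_k$ via the trivial path. Since $k$ is not a root, $R_k$ is a \emph{proper} subset of $V$. The key combinatorial observation is that the complement $V\setminus R_k$ is closed under out-neighbors: if $m\notin R_k$ and $m\to p\in E$, then $p\notin R_k$, for otherwise concatenating $m\to p$ with a path from $p$ to $k$ would place $m$ in $R_k$. Consequently, for $i\in V\setminus R_k$ we have $V_i^-\subseteq V\setminus R_k$, so the equations~\eqref{MODEL} indexed by $V\setminus R_k$ form a closed subsystem that does not depend on the agents in $R_k$.

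I then exploit this with a tailored initial condition. Take $d=1$ and set $x_j(0)=1$ for $j\in R_k$ and $x_j(0)=0$ for $j\in V\setminus R_k$. By the previous paragraph, the trajectory restricted to $V\setminus R_k$ solves the closed subsystem with zero initial data, hence is identically zero (uniqueness for the linear ODE); thus $x_j(t)\equiv 0$ for every $j\notin R_k$, and $V\setminus R_k\neq\emptyset$ since $R_k\subsetneq V$, so $\lim_{t\to\infty}x_j(t)=0$ for at least one $j$. Because all agents reach the same limit, $\lim_{t\to\infty}x_i(t)=0$ for every $i$. On the other hand this common limit equals $\sum_j w_j x_j(0)=\sum_{j\in R_k}w_j$. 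Hence $\sum_{j\in R_k}w_j=0$, and since every $w_j\ge 0$ this forces $w_j=0$ for all $j\in R_k$, contradicting $w_k>0$. Therefore every vertex with nonzero weight is a root, i.e. $V_w\subseteq V_r$.

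I expect the only delicate point to be getting the direction of the reachability relation right: $R_k$ must be the set of \emph{ancestors} of $k$, so that it is its complement — not $R_k$ itself — that is invariant under the dynamics~\eqref{MODEL}, whose information flow runs along out-edges. Everything else is routine: the autonomy of the subsystem on $V\setminus R_k$ and the vanishing of its zero-initialized solution are immediate from the form of~\eqref{MODEL}, and the final contradiction is a one-line consequence of the nonnegativity of $w$.
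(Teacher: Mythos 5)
Your proof is correct, and it takes a noticeably different route from the paper's. The paper argues directly at the matrix level: ordering the vertices so the root set $V_r$ comes first, it observes that $V_r$ is closed under out-edges, so every $A\in\mathbb{A}_G$ is block lower-triangular with respect to the partition (roots, non-roots); hence $\exp(At)$ has the same zero block for all $t$, and letting $t\to\infty$ in $\exp(At)\to \mathbf{1}\,w^\top$ forces $w_j=0$ for every non-root $j$ in one stroke. You instead argue the contrapositive vertex by vertex: for a non-root $k$ you take the ancestor set $R_k\subsetneq V$, note that its complement is closed under out-edges (the same kind of closure observation the paper makes, but for a different set), and then feed the dynamics the indicator of $R_k$ as initial data so that the autonomous subsystem on $V\setminus R_k$ stays at zero, forcing the consensus value $\sum_{j\in R_k}w_j$ to vanish and hence $w_k=0$ by nonnegativity. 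Both arguments ultimately rest on the same convergence fact $\exp(At)\to\mathbf{1}\,w^\top$ (which the paper states and which does require noting that all nonzero eigenvalues of an ISM have strictly negative real part); the paper's version is more compact and dispenses with all non-roots simultaneously, while yours is more elementary in flavor, makes the ``no information flows from $R_k$ to its complement'' intuition explicit through a concrete trajectory, and needs the extra bookkeeping of the contrapositive and the tailored initial condition. Your closing caveat about the direction of reachability is exactly right and is handled correctly: it is $V\setminus R_k$, not $R_k$, that is invariant under the out-edge information flow of~\eqref{MODEL}.
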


\begin{proof}
Without loss of generality, we may assume that the root set $V_r$ consists of the first $m$ vertices. Then, each matrix $A$ in $\mathbb{A}_{G}$ is a lower block-triangular matrix, i.e., 
\begin{equation}\label{A11}
A = 
\begin{pmatrix}
A_{11} & 0 \\
A_{21} &  A_{22}
\end{pmatrix}
\end{equation}
with $A_{11}$ an $m$-by-$m$ square matrix and $A_{12}$ a zero block. Indeed, if $a_{ij} \neq 0$ for $1 \leq i\leq m$, then $i \to  j$ is an edge of $G$, and since $i$ is a root, then so is $j$.

In view of the above,  the exponential $\exp(At)$, as the state transition matrix of system~\eqref{MODELN}, is also a lower block-triangular matrix with blocks of the same dimensions as the blocks of $A$. Furthermore, since the matrix $A$ has a simple zero eigenvalue while all of its other eigenvalues have negative real parts,  we have 
\begin{equation}\label{ew}
\lim_{t\to \infty }\exp(At) = \mathbf{1} \cdot  w^\top
\end{equation}
Using~\eqref{A11} and~\eqref{ew}, we know that $w_i = 0$ for all $i=m+1,\ldots, n$. This proves the result. 
\end{proof}

We now show that the subgraph $G_{ w}$ is strongly connected.

\begin{lem}\label{2lem2}
Let $G$ be a rooted digraph, and $ w$ be a vector in $\operatorname{Sp}[V]$. If $\mathbb{A}_{G}(w)$ is nonempty, then the subgraph $G_{ w}$ is strongly connected. 
\end{lem}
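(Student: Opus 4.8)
The plan is to argue by contradiction, exploiting the same block-triangular structure that powered Lemma~\ref{2lem1}, but now applied to the induced subgraph $G_{w}$ rather than to $G$ itself. By Lemma~\ref{2lem1} we already know $V_w \subseteq V_r$, so every vertex of $G_w$ is a root of $G$; what remains is to show the subgraph $G_w$ (with edge set inherited from $G$) is strongly connected. Suppose it is not. Then $V_w$ decomposes with respect to the strongly connected components of $G_w$, and since any DAG of components has a sink component, we may pick a nonempty proper subset $S \subsetneq V_w$ that is \emph{closed} in $G_w$ in the sense that no edge of $G_w$ leaves $S$ — i.e., for $i \in S$ and $j \in V_w \setminus S$, $i \to j$ is not an edge of $G$. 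The idea is to use this closedness to derive a constraint on the matrix $A^\top w = 0$ that forces some $w_i$ with $i \in S$ to vanish, contradicting $S \subseteq V_w$.

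The key computation is the following. Take any $A = (a_{ij}) \in \mathbb{A}_G(w)$. The condition $A^\top w = 0$ reads, for each column index $j$,
\begin{equation*}
\sum_{i=1}^n a_{ij} w_i = 0, \qquad \text{i.e.} \qquad a_{jj} w_j = -\sum_{i \neq j} a_{ij} w_i = -\sum_{i \to j \in E} a_{ij} w_i.
\end{equation*}
Now sum this relation over all $j \in S$:
\begin{equation*}
\sum_{j \in S} a_{jj} w_j = -\sum_{j \in S}\ \sum_{i \to j \in E} a_{ij} w_i.
\end{equation*}
Using $a_{jj} = -\sum_{k \neq j} a_{jk} = -\sum_{j \to k \in E} a_{jk}$ on the left and swapping the order of summation on the right, one gets an identity whose terms with both endpoints in $S$ cancel, leaving a balance between edges entering $S$ from outside and edges leaving $S$. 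Because $S$ is closed in $G_w$, the only edges of $G$ that leave $S$ must go to vertices outside $V_w$, where $w$ vanishes; so the ``outflow'' weighted by $w$ contributes nothing on one side. The surviving relation will be a sum of manifestly nonnegative terms (products of nonnegative $a_{ij}$ with positive $w_i$, $i \in S$) set equal to zero, forcing every such $a_{ij} = 0$ for edges $i \to j$ with $i \in V_w \setminus S$, $j \in S$. In other words, no edge of $G$ enters $S$ from $V_w \setminus S$ either.

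At that point $S$ is both a closed set in $G_w$ and has no incoming $w$-weighted edges, so inside the root block $A_{11}$ (the restriction of $A$ to $V_r \supseteq V_w$) the rows indexed by $S$ form an invariant subsystem decoupled from the rest of $V_w$: vertices in $S$ observe only vertices in $V \setminus V_w$ or in $S$. I would then reorder vertices so that $A_{11}$ itself becomes block-triangular with an $S$-block that is an ISM on its own, and re-run the limit argument from Lemma~\ref{2lem1}: the zero eigenvalue of $A_{11}$ being simple is incompatible with having two disjoint closed vertex sets both carrying nonzero $w$-mass, since each closed set contributes its own null direction (or, dually, the left null vector $w$ restricted to such a block must be supported on a single strongly connected sink, while $S$ and $V_w\setminus S$ would give two). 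This yields the contradiction and shows $G_w$ is strongly connected.

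The main obstacle I expect is making the ``closed set'' bookkeeping airtight: one must be careful that the cancellation in the summed balance equation genuinely eliminates all internal-to-$S$ edge terms and that the only escape route for a closed set of $G_w$ is into $V\setminus V_w$ — this uses precisely that $S$ is a union of sink SCCs of $G_w$, not merely any proper subset. A cleaner route, which I would try first, is to avoid the summation trick entirely: since $A^\top w = 0$ means $w$ is a nonnegative left null vector of $A$, and $A$ restricted to $V_r$ has a simple zero eigenvalue, the left null space is one-dimensional; standard Perron--Frobenius / M-matrix theory for ISMs then says the support of the unique (up to scaling) nonnegative left null vector is exactly the \emph{unique} sink SCC of the root block's graph, which is automatically strongly connected. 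Identifying $V_w$ with that support gives strong connectedness of $G_w$ directly. Either way, the crux is the uniqueness of the sink component forced by simplicity of the zero eigenvalue.
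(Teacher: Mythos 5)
Your main argument has a genuine gap at the flow-balance step. Summing the column relations $\sum_i a_{ij}w_i=0$ over $j\in S$ and substituting $a_{jj}=-\sum_{k\neq j}a_{jk}$ does cancel the terms with both endpoints in $S$, but what survives is
\begin{equation*}
\sum_{j\in S}\;\sum_{i\in V_w\setminus S} a_{ij}\,w_i \;=\; \sum_{j\in S} w_j \sum_{k\notin V_w} a_{jk},
\end{equation*}
because the edges leaving $S$ (which, by closedness of $S$ in $G_w$, go only to $V\setminus V_w$) are weighted by the \emph{source} weights $w_j>0$, not by the target weights. So your claim that the outflow ``contributes nothing'' since $w$ vanishes outside $V_w$ confuses source and target: the identity only equates two nonnegative quantities and does not force $a_{ij}=0$ on edges entering $S$ from $V_w\setminus S$. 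The missing ingredient — which is precisely the first step of the paper's proof — is to apply $A^\top w=0$ to each column $k\notin V_w$: there every term $a_{jk}w_j$ with $j\in V_w$ is nonnegative and the column sum is zero, hence $a_{jk}=0$ for all $j\in V_w$, $k\notin V_w$ (in block form, $A_{12}=0$). With that established, the right-hand side above vanishes, your conclusion that no edge with positive weight enters $S$ from $V_w\setminus S$ follows, and the block-triangular ``two zero eigenvalues'' contradiction goes through; note that decoupling the rows indexed by $S$ (and by $V_w\setminus S$) also requires $A_{12}=0$, not merely closedness of $S$ inside $G_w$.

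Your secondary ``cleaner route'' — zero simple implies a one-dimensional left null space, and the unique nonnegative left null vector of an ISM is supported exactly on the unique sink strongly connected class, which is strongly connected — is correct and yields the lemma directly, but as written it outsources the crux to quoted Perron--Frobenius/Markov-chain facts. The paper proves this self-containedly along a shorter path: it deduces $A_{12}=0$ as above, observes that $A_{11}$ is then an ISM on $V_w$ with a simple zero eigenvalue whose induced graph must be rooted, and applies Lemma~\ref{2lem1} to that subgraph, forcing its root set to be all of $V_w$ and hence $G_w$ to be strongly connected. Either repair (adding the $A_{12}=0$ step to your balance argument, or fleshing out the Perron--Frobenius route) would make your proof complete.
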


\begin{proof}
As in the proof of the previous lemma, we may assume without loss of generality that the set $V_{ w}$ consists of the first $m$ vertices of $G$. Let $A$ be a matrix in $\mathbb{A}_{G}(w)$, and partition $A$ into blocks as
\begin{equation*}
A = 
\begin{pmatrix}
A_{11} & A_{12}\\
A_{21} & A_{22}
\end{pmatrix}
\end{equation*} 
with $A_{11}$ being an $m$-by-$m$ matrix and correspondingly, partition $ w$ into 
\begin{equation*}
 w = ( w',0)
\end{equation*}
with $ w'$ a vector in $\mathbb{R}^{m}$. By assumption,  each entry of $ w'$ is nonzero.  Since $A$ is in $\mathbb{A}_{G}(w)$, we have $A^\top  w = 0$, so then $ A^\top_{12} w' = 0$. Because each entry of $A_{12}$ is non-negative and  each entry of $ w'$ is positive, we must have $A_{12} = 0$. This then implies that $A_{11}$ is an $m$-by-$m$ ISM.

Let $G' = (V',E')$, with $V':= \{1,\ldots, m\}$, be a subgraph of $G$ induced by the block matrix $A_{11}$, i.e.,  an edge $i\to j$ is in $E'$ if and only if $a_{ij} >0$.  It suffices to show that $G'$ is strongly connected. To do this, note that the digraph $G'$ must be rooted because otherwise $A_{11}$, and hence $A$, has at least two zero eigenvalues.  Now, suppose that $G'$ is not strongly connected,  then the root set of $G'$, denoted by $V'_r$, is a proper subset of $V'$. On the other hand, if we let $V'_{ w'}$ be the collection of indices of nonzero entries of $ w'$, then $V'_{ w'} = V'$. But, from  Lemma~\ref{2lem1}, we know that
\begin{equation*}
V' = V'_{ w'} \subseteq V'_r \subsetneq V' 
\end{equation*}
which is a contradiction. Thus, we conclude that $G'$ is strongly connected. This completes the proof.  
\end{proof}

By combining Lemmas~\ref{2lem1} and~\ref{2lem2}, we establish Proposition~\ref{PMS}.  
We conclude this section by describing the relevant sets of some families of digraphs, namely cycle graphs and complete graphs.

\begin{cor}
Let $G$ be an $n$-cycle. If $\mathbb{A}_{G}(w)$ is nonempty, then either $ w =  e_i$ for some $i=1,\ldots,n$ or $ w$ has no zero entry.  
\end{cor}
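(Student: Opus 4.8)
The plan is to read this off Proposition~\ref{PMS}. Since $\mathbb{A}_{G}(w)$ is nonempty, that proposition says $V_w=\{i\in V\mid w_i\neq 0\}$ is a relevant subset of $G$. So it suffices to prove the purely combinatorial claim that the only relevant subsets of an $n$-cycle are the singletons $\{i\}$ and the whole vertex set $V$. Granting this, the statement follows immediately: if $V_w=\{i\}$, then $w\in\operatorname{Sp}[V]$ forces $w_i=1$, i.e. $w=e_i$; and if $V_w=V$, then by the very definition of $V_w$ no entry of $w$ vanishes.

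To establish the claim, label the vertices of the $n$-cycle so that its edges are $1\to 2\to\cdots\to n\to 1$. An $n$-cycle is strongly connected, so $V_r=V$ and condition (a) of Definition~\ref{def:relsub} holds for every $V'$; the content is condition (b), that the induced subgraph $G'$ on $V'$ be strongly connected. Every singleton $V'=\{i\}$ is relevant, since a one-vertex graph is strongly connected by convention. Conversely, suppose $V'$ is relevant with $|V'|\geq 2$. A strongly connected digraph on at least two vertices contains a directed cycle (concatenate a path $u\to\cdots\to v$ with a path $v\to\cdots\to u$ for two distinct vertices $u,v$), and the edges of this directed cycle lie among the edges $i\to i+1$ of the $n$-cycle. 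But starting from any vertex and following such edges forces one to traverse all of $1,2,\ldots,n$ before returning, so the cycle — and hence $V'$ — is all of $V$.

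The only step that takes any thought is this last combinatorial point, and even there the key observation — that the unique directed cycle inside a directed $n$-cycle is the $n$-cycle itself — is elementary; one just has to be mindful of the orientation (a directed path on two or more vertices is not strongly connected) and of the convention that single-vertex graphs count as strongly connected. The rest is an immediate application of Proposition~\ref{PMS} together with the definition of $\operatorname{Sp}[V]$.
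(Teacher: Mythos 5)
Your proof is correct and follows the same route as the paper: invoke Proposition~\ref{PMS} to conclude $V_w$ is relevant, then observe that the only strongly connected induced subgraphs of a directed $n$-cycle are single vertices and the whole cycle. You simply spell out the combinatorial step that the paper states without elaboration.
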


\begin{proof}
If $G$ is a cycle, then $G$ is strongly connected. On the other hand, there is no proper subgraph of $G$, other than a single vertex, which is strongly connected. 
\end{proof}

\begin{cor}
Let $G$ be a rooted digraph, and let $W$ be the set of vectors in $\operatorname{Sp}[V]$ with $\mathbb{A}_{G}( w)$ being nonempty. If $W = \operatorname{Sp}[V]$, then the digraph $G$ must be a complete graph. 
\end{cor}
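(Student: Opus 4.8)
The plan is to establish the contrapositive: if $G$ is rooted but not complete, then $W \subsetneq \operatorname{Sp}[V]$. By Proposition~\ref{PMS}, for any $w$ with $\mathbb{A}_G(w)$ nonempty, the support set $V_w$ must be a relevant subset of $G$; in particular the subgraph induced by $V_w$ must be strongly connected. So it suffices to exhibit a single point $w^* \in \operatorname{Sp}[V]$ whose support $V_{w^*}$ is \emph{not} a relevant subset. The natural candidate is $w^* \in \operatorname{Sp}(V)$, i.e. a vector with all entries strictly positive, so that $V_{w^*} = V$; then $\mathbb{A}_G(w^*)$ nonempty would force the subgraph induced by $V$ — that is, $G$ itself — to be strongly connected and to have $V = V_r$.

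First I would handle the reduction. Suppose, for contradiction, that $W = \operatorname{Sp}[V]$. Pick any $w^*$ in the interior $\operatorname{Sp}(V)$; since $\operatorname{Sp}(V) \subseteq \operatorname{Sp}[V] = W$, the set $\mathbb{A}_G(w^*)$ is nonempty. By Proposition~\ref{PMS}, $V_{w^*}$ is a relevant subset of $G$. But $V_{w^*} = V$ by choice of $w^*$, so condition (a) of Definition~\ref{def:relsub} gives $V \subseteq V_r$, hence $V_r = V$, and condition (b) gives that the subgraph induced by $V$, namely $G$ itself, is strongly connected.

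Second, I would argue that a strongly connected simple digraph on $V$ for which \emph{every} $w \in \operatorname{Sp}[V]$ lies in $W$ must in fact be the complete digraph. Strong connectivity alone is not enough — e.g. a directed $n$-cycle is strongly connected but, by the preceding corollary, its $W$ contains only the vertices $e_i$ and the fully-supported vectors, missing for instance $w$ supported on a proper subset of size between $2$ and $n-1$. So the real content is: whenever $G$ is strongly connected but misses some edge $i \to j$, there is a relevant-looking candidate support that fails. Concretely, if $i \to j \notin E$, consider a vector $w$ supported exactly on $\{i,j\}$ (with both entries positive, the rest zero); this $w$ lies in $\operatorname{Sp}(\{i,j\}) \subseteq \operatorname{Sp}[V] = W$, so $\mathbb{A}_G(w)$ is nonempty, forcing $\{i,j\}$ to be relevant, hence the subgraph induced by $\{i,j\}$ strongly connected — but that requires \emph{both} edges $i\to j$ and $j \to i$, contradicting $i \to j \notin E$. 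Since $G$ has no self-loops, ranging over all ordered pairs $i \neq j$ shows $E$ contains every possible edge, i.e. $G$ is complete.

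The main obstacle — really the only subtlety — is making sure the candidate vectors actually sit inside $\operatorname{Sp}[V]$ as the theorem's hypothesis demands, and correctly invoking the ``strongly connected'' half of relevance for a two-vertex induced subgraph (a simple digraph on two vertices is strongly connected iff it has both directed edges between them). One should also note the degenerate case $|V| = 1$: then $G$ is vacuously complete and there is nothing to prove, so we may assume $n \geq 2$. With these pieces in place the argument closes; it is short precisely because Proposition~\ref{PMS} does the heavy lifting.
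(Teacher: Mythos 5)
Your proposal is correct and follows essentially the same route as the paper: the key step in both is to apply Proposition~\ref{PMS} to a vector supported exactly on a pair $\{i,j\}$, forcing the induced two-vertex subgraph to be strongly connected (a $2$-cycle) for every pair, hence $G$ complete. Your preliminary reduction via $w^*\in\operatorname{Sp}(V)$ showing $G$ strongly connected is harmless but unnecessary, since the pairwise argument alone already yields completeness.
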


\begin{proof}
It suffices to show that each induced subgraph of two vertices is a $2$-cycle. Let 
\begin{equation*}
 w  = \frac{1}{2}( e_i +  e_j).
\end{equation*}
Then, the set of nonzero entries of $ w$ is given by
\begin{equation*}
V_{ w} = \{i,j\}
\end{equation*}  
So by Proposition~\ref{PMS}, the set $V_{ w}$ is relevant. In particular, the subgraph $G_{ w}$ is strongly connected, and hence a $2$-cycle.  
\end{proof}

\subsection{On Cycles of Digraphs}\label{3b}
In this sub-section, we assume that $G = (V,E)$ is an arbitrary digraph. Note that if $G$ is not rooted and $w \in \operatorname{Sp}[V]$, then $\mathbb{A}_G(w)$ is empty since no matrix in $\mathbb{A}_G$ has zero as a simple eigenvalue. We thus relax this condition in the following definition:

\begin{Definition}[Relaxed $w$-Feasible Dynamics] Let $ w\in \operatorname{Sp}[V]$.  We define the set of {\bf relaxed $w$-feasible dynamics} $\widehat{\mathbb{A}}_{G}(w) \subset \mathbb{A}_G$ as follows:  
\begin{equation*}\label{hatA}
\widehat{\mathbb{A}}_{G}(w):= \{A \in \mathbb{A}_{G}\mid A^\top w = 0\}
\end{equation*} 
\end{Definition}

Our goal in this sub-section is to characterize the set $\widehat{\mathbb{A}}_{G}(w)$. This is important because as we will see later  when $G$ is strongly connected, the set $\widehat{\mathbb{A}}_{G}(w)$ is the closure of $\mathbb{A}_{G}(w)$. 

We say that a digraph $G'$ is a  {cycle of $G$} if $G'$ is  a subgraph of $G$ and is a cycle with at least two vertices.  We label the  cycles of $G$ as  $G_1,\ldots, G_k$.  Let $ w$ be in $\operatorname{Sp}(V)$. So then,  each entry $w_i$ of $ w$ is positive. For each cycle $G_i $ of $G$, we now define an {\bf associated  ISM} $C_i$ by specifying its off-diagonal entries. Let $C_{i, jk}$ be the $jk$-th entry of $C_i$, and let
\begin{equation}\label{eq:defci}
C_{i,jk} := \left\{
\begin{array}{ll}
1/w_j  &  \text{if } j\to k \text{ is an edge of } G_i\\
0 & \text{otherwise}
\end{array}
\right.
\end{equation}
Its diagonal entries are set so that the entries of each row of $C_i$ sum to zero. We establish the following result in this sub-section.

\begin{pro}\label{CH}
Let $G$ be a digraph, and $ w$ be a vector in $\operatorname{Sp}(V)$.  Let $G_1,\ldots, G_k$ be the cycles of $G$, and  $C_1,\ldots, C_k$ be the associated ISMs. If $k \geq 1$, then the set $\widehat{\mathbb{A}}_{G}(w)$ is a convex cone spanned by $C_1,\ldots, C_k$, i.e., 
\begin{equation*}
\widehat{\mathbb{A}}_{G}(w)= \Co[C_1,\cdots, C_k]
\end{equation*} 
Moreover, each ray $\{\alpha C_i \mid \alpha \ge 0\}$ is an extreme ray of the convex cone.  
\end{pro}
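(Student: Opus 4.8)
The plan is to establish the two inclusions $\Co[C_1,\ldots,C_k]\subseteq\widehat{\mathbb{A}}_G(w)$ and $\widehat{\mathbb{A}}_G(w)\subseteq\Co[C_1,\ldots,C_k]$ separately, and then treat the extreme rays. For the first inclusion I would begin by observing that $\widehat{\mathbb{A}}_G(w)$ is a convex cone: the set $\mathbb{A}_G$ is itself a convex cone (its defining constraints are "off-diagonal entries lie in prescribed coordinate half-lines or are forced to zero" and the linear condition "rows sum to zero"), and $\widehat{\mathbb{A}}_G(w)$ is carved out of it by the additional homogeneous linear condition $A^\top w=0$. It then suffices to check that each $C_i\in\widehat{\mathbb{A}}_G(w)$. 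That $C_i\in\mathbb{A}_G$ is immediate from~\eqref{eq:defci} (non-negative off-diagonals supported on the edges of $G_i\subseteq G$, rows summing to zero), and a one-line computation of the $k$-th coordinate of $C_i^\top w$ gives zero: if $k$ is not a vertex of $G_i$ the relevant row and column of $C_i$ vanish, while if $k$ is a vertex of the cycle $G_i$ the unique in-edge of $k$ inside $G_i$ contributes $(1/w_j)\,w_j=1$ and the diagonal entry $-1/w_k$ contributes $-w_k/w_k=-1$. Hence $\Co[C_1,\ldots,C_k]\subseteq\widehat{\mathbb{A}}_G(w)$, and this is nontrivial precisely because $k\ge 1$.

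For the reverse inclusion, which is the substantive part, I would translate membership in $\widehat{\mathbb{A}}_G(w)$ into a statement about circulations on $G$. Given $A=(a_{ij})\in\widehat{\mathbb{A}}_G(w)$, define a flow on the edges by $\phi_{jk}:=a_{jk}w_j\ge 0$ for $j\to k\in E$; since $a_{jk}=0$ whenever $j\to k\notin E$, the support of $\phi$ lies in $E$. Using the row-sum-zero property to rewrite $a_{kk}=-\sum_{l\ne k}a_{kl}$ and then reading off the $k$-th coordinate of $A^\top w=0$, one gets $\sum_{j\ne k}a_{jk}w_j=w_k\sum_{l\ne k}a_{kl}$, which says exactly that the inflow $\sum_j\phi_{jk}$ equals the outflow $\sum_l\phi_{kl}$ at every vertex $k$; so $\phi$ is a non-negative circulation on $G$. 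By the classical circulation decomposition theorem — which I would prove here by induction on $|\operatorname{supp}\phi|$, repeatedly following positive-flow edges to locate a directed cycle in the support and subtracting off the minimal multiple of its indicator circulation — we obtain $\phi=\sum_i\beta_i\mathbf{1}_{G_i}$ with $\beta_i\ge 0$, where each directed cycle appearing is a subgraph of $G$ that is a cycle with at least two vertices (automatic, as $G$ is simple), hence one of $G_1,\ldots,G_k$. Dividing back through by $w_j$ shows that the off-diagonal part of $A$ coincides with that of $\sum_i\beta_i C_i$; since both are ISMs and agree off the diagonal, they agree everywhere, so $A=\sum_i\beta_i C_i\in\Co[C_1,\ldots,C_k]$.

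For the extreme-ray claim, fix $i$ and suppose $C_i=A+B$ with $A,B\in\widehat{\mathbb{A}}_G(w)$. Because $A$ and $B$ have non-negative off-diagonal entries, their off-diagonal supports are contained in that of $C_i$, i.e., in the edge set of the cycle $G_i$. Restricting $A$, $B$, $w$, and $C_i$ to the vertex set of $G_i$ and invoking the equality already proved for the digraph $G_i$ — whose only cycle is $G_i$ itself — yields $\widehat{\mathbb{A}}_{G_i}(w')=\Co[C_i']$, so $A$ and $B$ are non-negative multiples of $C_i$; equivalently, one can argue directly that an ISM supported on $1\to 2\to\cdots\to m\to 1$ with $A^\top w=0$ must satisfy $a_{k,k+1}w_k=a_{k-1,k}w_{k-1}$ around the cycle, forcing $a_{k,k+1}w_k$ to be constant and hence $A$ proportional to $C_i$. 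Thus $\{\alpha C_i\mid\alpha\ge 0\}$ is an extreme ray.

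The step I expect to be the main obstacle is the reverse inclusion, specifically carrying out the circulation-decomposition argument rigorously and verifying that only the listed cycles $G_1,\ldots,G_k$ can occur in the decomposition, so that the resulting coefficients are genuinely indexed by these cycles; the remaining steps are either linear-algebra bookkeeping or a direct computation on a single cycle.
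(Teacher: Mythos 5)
Your proof is correct, and at its core it peels off cycles from the support of $A$ exactly as the paper does, but the supporting machinery differs in two places worth noting. For the inclusion $\widehat{\mathbb{A}}_{G}(w)\subseteq \Co[C_1,\ldots,C_k]$, the paper works with $A$ directly: it subtracts $\alpha_i C_i$ with $\alpha_i=\min\{w_ja_{jk}\mid j\to k\in E_i\}$ for a cycle $G_i$ in the support, so the number of nonzero entries strictly drops, and it handles termination with a separate lemma (Lemma~\ref{3lem1}: for an acyclic digraph the relaxed set is $\{0\}$), proved by induction on the vertices via a source vertex. You instead pass to the circulation $\phi_{jk}=w_ja_{jk}$ --- precisely the change of variables $B=\Lambda A$ that the paper introduces only later, in Lemma~\ref{7lem1}, to connect $\widehat{\mathbb{A}}_{G}(w)$ with graph balancing --- and run the standard flow-decomposition induction, obtaining the needed cycle in the support of a nonzero circulation by a forward walk using conservation. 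Your walk argument is logically the contrapositive of Lemma~\ref{3lem1}, so the content is the same, but your framing makes the balancing interpretation explicit from the outset. For the extreme rays you verify the definition directly: if $C_i=A+B$ with $A,B\in\widehat{\mathbb{A}}_{G}(w)$, the off-diagonal supports of $A$ and $B$ lie on the cycle $G_i$, and conservation around the cycle forces $w_k a_{k,k+1}$ to be constant, so $A$ and $B$ are non-negative multiples of $C_i$. The paper argues differently, showing only that $C_i$ is not a non-negative combination of the other $C_j$'s; this yields extremality because the cone is pointed (it contains no line, since an ISM whose negative is also an ISM is zero), a point left implicit there, so your direct verification is, if anything, slightly more self-contained. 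I see no gaps; the only cosmetic caveat is that when you restrict to the vertex set of $G_i$ the truncated weight vector need not sum to one, but rescaling $w$ changes neither the condition $A^\top w=0$ nor the ray spanned by the associated cycle matrix, and your alternative direct computation on the cycle avoids the issue entirely.
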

By convention, if $k=0$, we set $\widehat{\mathbb{A}}_{G}(w) =\{0\}$. We note here that a similar result relating cycles and doubly stochastic matrices can be found in  \cite{CycDsm}. We prove Proposition~\ref{CH} by first investigating a special case where $G$ is acyclic, i.e.,  there is no cycle contained in $G$.

\begin{lem}\label{3lem1} 
Let $G$ be an acyclic digraph, and $ w$ be a vector in $\Sp(V)$.  Then, $\widehat{\mathbb{A}}_{G}(w) = \{0\}$.
\end{lem}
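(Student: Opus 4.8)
The plan is to exploit the fact that an acyclic digraph admits a topological ordering of its vertices, and then to peel off the rows of $A$ one at a time using the constraint $A^\top w = 0$.

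First I would relabel the vertices $1,\ldots,n$ so that every edge $i\to j\in E$ satisfies $i<j$; such an ordering exists precisely because $G$ has no cycle (equivalently, $G$ has a source vertex and one can induct on deleting sources). Now fix an arbitrary $A=(a_{ij})\in\widehat{\mathbb{A}}_G(w)$. The key relation to use is the $j$-th component of $A^\top w=0$, namely
\begin{equation*}
\sum_{i} a_{ij}\, w_i = 0 \qquad \text{for every } j .
\end{equation*}
In this sum the diagonal contribution $a_{jj}w_j$ satisfies $a_{jj}=-\sum_{k\neq j}a_{jk}\le 0$ since $A$ is an ISM, while every off-diagonal term $a_{ij}w_i$ with $i\neq j$ is non-negative because $a_{ij}\ge 0$ and $w_i>0$ (as $w\in\Sp(V)$).

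Next I would run an induction on $m=1,\ldots,n$ with the claim that the $m$-th row of $A$ is the zero row. For the inductive step, observe that in the column identity with $j=m$ a term $a_{im}$ with $i>m$ vanishes because $i\to m$ is not an edge of $G$ (it would violate the topological ordering), and a term $a_{im}$ with $i<m$ vanishes by the inductive hypothesis that rows $1,\ldots,m-1$ of $A$ are zero. Hence the identity collapses to $a_{mm}w_m=0$, so $a_{mm}=0$; combined with $a_{mm}=-\sum_{k\neq m}a_{mk}$ and $a_{mk}\ge 0$, this forces $a_{mk}=0$ for all $k$, i.e.\ the $m$-th row of $A$ is zero. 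The base case $m=1$ is this same argument with a vacuous inductive hypothesis, using that vertex $1$ has no incoming edge. Therefore $A=0$, and since $A\in\widehat{\mathbb{A}}_G(w)$ was arbitrary, $\widehat{\mathbb{A}}_G(w)=\{0\}$.

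There is essentially no serious obstacle here; the only point requiring care is the bookkeeping of which entries of a given column can be nonzero, and this is entirely dictated by the topological ordering together with the already-processed rows. An equivalent, label-free formulation would repeatedly remove a source vertex of $G$ and apply the column identity at that vertex, but the inductive version above is the cleanest to write.
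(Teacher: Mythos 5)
Your proposal is correct and takes essentially the same route as the paper: the paper's proof also isolates a source vertex (a vertex with no incoming edge, guaranteed by acyclicity), uses the column identity $\sum_i a_{ij} w_i = 0$ at that vertex together with $w_i > 0$ to force the diagonal entry and hence the whole row to vanish, and then recurses on the remaining subgraph. The only difference is bookkeeping: the paper inducts on the number of vertices and renormalizes the remaining weights to stay in $\Sp(V')$, whereas your topological-ordering version peels off rows of the fixed matrix $A$ directly and avoids that renormalization.
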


\begin{proof}
We prove the lemma by induction on the number $n$ of vertices of $G$.  
\\
{\it Base case}. If $G$ consists of only one vertex, then there is nothing to prove.
\\
{\it Inductive step}. Suppose the statement of the lemma holds for $n$, we show that it holds for $n+1$.

Since $G$ is acyclic, there must exist a vertex, say vertex $1$, with no incoming edge. Let $A\in \widehat{\mathbb{A}}_{G}(w)$, and let $ a_1$ be the  first column of $A$. Then $ a_1$ has at most one nonzero entry, i.e., the first entry of $ a_1$. Let $a_{11}$ be the first entry of $ a_1$, then 
\begin{equation*}
  a_1^{\top} w = w_1 a_{11} = 0.
\end{equation*}       
Since $w_1$ is positive by assumption, we then have $a_{11} = 0$. This then implies that the first row vector of $A$ is also a zero vector. Hence we can write $A$ as
\begin{equation*}
A = 
\begin{pmatrix}
0 & 0\\
0 & A'
\end{pmatrix} 
\end{equation*}
with $A'$ an $n$-by-$n$ matrix and $V'=\{2,\ldots,n+1\}$ the corresponding vertex set. It now suffices to show that $A'$ is a zero matrix.  

Let $ w'$ be a vector in $\mathbb{R}^{n}$ defined by
\begin{equation*}
 w':= \frac{1}{1- w_1} (w_2,\ldots,w_{n+1}).
\end{equation*} 
Note that $ w'$ is well defined since $w_1 < 1$. By construction, we have $ w'\in \Sp(V')$. 
Moreover, $ w'$ satisfies the condition that $A'^{\top}  w'= 0$. 
Let $G'$ be a subgraph of $G$ induced by $V'$. Let $\mathbb{A}_{G'}$ be the set of sparse  ISMs associated with $G'$, and let 
\begin{equation*}
\widehat{\mathbb{A}}_{G'}(w'):= \left \{A'\in \mathbb{A}_{G'}\mid  A'^\top  w' = 0 \right \}
\end{equation*} 
Then, $A'\in \widehat{\mathbb{A}}_{ G'}(w')$. Since the subgraph $G'$ is acyclic,  we conclude by the induction hypothesis that $\widehat{\mathbb{A}}_{ G'}(w')$ contains only the zero matrix, and hence $A' = 0$. This  completes the proof.  
\end{proof}

We now prove Proposition~\ref{CH}.

\begin{proof}[Proof of Proposition~\ref{CH}] 

We first show that $ \Co[C_1,\cdots, C_k]\subseteq \widehat{\mathbb{A}}_{G}(w)$. It suffices to show that each $C_i$ is contained in $\widehat{\mathbb{A}}_{G}(w)$. 
Denote by $ v_j$ the $j$-th column of $C_i$; then either $ v_j$ is a zero vector or $ v_j$ contains two nonzero entries.  If $ v_j$ is a zero vector, then $  v_j^\top  w= 0$, so we focus on the latter case. By definition of $C_i$, the $j$-th entry of $ v_j$ is $-1/w_j$. We assume without loss of generality that the other nonzero entry of $ v_j$ is the $k$-th entry; its value is then given by $1/w_k$.  Then, 
\begin{equation*}
  v_j^\top  w = w_j \cdot (-1/w_j) + w_k \cdot (1/w_k) = 0.
\end{equation*}
This equality holds for each column vector of $C_i$, and hence  $ C_i^{\top}  w = 0$.

We now show that $\widehat{\mathbb{A}}_{G}(w) \subseteq \Co[C_1,\cdots, C_k]$. To this end,  fix a matrix $A$ in $\widehat{\mathbb{A}}_{G}(w)$. Assume that there is a cycle $G_i = (V_i,E_i)$ in $G$  such that $a_{jk}>0$ for all $j\to k \in E_i$. We show below that if this is not the case, $A$ is necessarily the zero matrix. First, note that there exists an $\alpha>0$ such that $(A - \alpha C_i)$ is also an ISM; indeed the matrix $A - \alpha C_i$ is  in $\widehat{\mathbb{A}}_{G}(w)$ because 
\begin{equation*}
(A - \alpha C_i)^\top  w= A^\top  w -\alpha C_i^\top  w = 0.
\end{equation*}
Now let
\begin{equation*}
\alpha_i := \max \left \{\alpha \in\bb{R}\mid (A - \alpha C_i) \in \widehat{\mathbb{A}}_{G}(w) \right \}, 
\end{equation*}  
which can be computed explicitly as
\begin{equation*}
\alpha_i = \min\left \{ w_{j} a_{jk} \mid j\in V_i, j\to k\in E_i\right\}.  
\end{equation*}
Let $A' : = A - \alpha_i C_i$, then $A'$ has more zero entries than does $A$. To see this,  it suffices to check the off-diagonal entries of $A'$. First note that if the $jk$-th, $j\neq k$, entry of $A'$ is positive, then so is the $jk$-th entry of $A$. Thus, $A'$ has at least as many zero entries as $A$ does. 
Now let
\begin{equation*} 
j\to k \in \operatorname{argmin} \{ w_{j} a_{jk} \mid j\in V_i, j\to k\in E_i\}. 
\end{equation*}
Then, the $jk$-th entry of $A'$ is zero because
\begin{equation*}
a'_{jk}=a_{jk} - (w_{j}a_{jk}) C_{i,jk} = a_{jk} - (w_{j}a_{jk}) 1/w_{j} = 0.
\end{equation*}
On the other hand, we have $a_{jk} > 0$. Thus, $A'$ has more zero entries than does $A$. We then say that the matrix $A'$ is a {\it reduction} of $A$. 
Now let 
\begin{equation*}
A\to A^{(1)} \to A^{(2)}\to\ldots  
\end{equation*} 
be a chain of reductions. Since $A^{(k)}$ has more zero entries than $A^{(k-1)}$ does, the chain must be finite. Suppose that this chain stops at $\tilde A$, i.e., there does not exist a reduction of $\tilde A$. It then suffices to prove that $\tilde A$ is a zero matrix. 
Let $\tilde G$ be a digraph with $n$ vertices induced by the matrix $\tilde A$. 
Since there is no reduction of  $\tilde A$, the induced digraph $\tilde G$ must be acyclic. Since $\tilde A^{\top}  w = 0$ with $ w\in \operatorname{Sp}(V)$, by Lemma~\ref{3lem1}, we have $\tilde A = 0$.

It now remains to show that each ray $\{\alpha C_i \mid \alpha\ge  0\}$ is an extreme ray of the convex cone $\widehat{\mathbb{A}}_{G}(w)$. We show that for each matrix $C_i$, there does not exist a set of non-negative coefficients $\alpha_j$'s such that  
\begin{equation*}\label{extray}
C_i = \sum_{j\neq i}\alpha_j C_j. 
\end{equation*}
We prove by contradiction. Suppose that the expression above holds; then at least one coefficient $\alpha_j$,  for $j\neq i$, is positive. Hence, the cycle $G_j$ is a subgraph of $G_i$. On the other hand, $G_i$ itself is a cycle, so we must have $G_i = G_j$, which is a contradiction. This  completes the proof. 
\end{proof}

\subsection{On Principal Subsets}\label{3c}

In this section, we assume that  $G=(V,E)$ is a strongly connected digraph with $n$ vertices for $n>1$. Let $\mathcal{G}=\{G_1,\ldots,G_k\}$ be the set of cycles of $G$;   $\mathcal{G}$ is  non-empty   since for each edge $i\to j$ of $G$, there is at least one cycle containing that edge. 
Recall that a Hamiltonian cycle of $G$ is a cycle that passes through every vertex. We can thus say that this cycle  strongly connects the graph, in the sense that each vertex is connected to every other vertex using only edges in the cycle. If $G$ does not have a Hamiltonian cycle, one can nevertheless use several cycles to strongly connect $G$. We make this formal through the introduction of  \emph{principal subset} of $\mathcal{G}$. In words, these are sets of cycles of $G$ that strongly connect $G$. Specifically, we have the following definition:
\begin{Definition}
[Principal Subset] Let $G$ be a digraph and $\mathcal{G}=\{G_1,\ldots,G_k\}$ be its set of cycles. Let $G_i=(V_i,E_i)$. We call a subset $\{G'_{1},\ldots,G'_{m}\}$ of $\mathcal{G}$ {\bf principal}
if the graph
\begin{equation*}
G'=(V,\cup_{i} E'_i)
\end{equation*}
is strongly connected. We label the principal subsets of $\mathcal{G}$ as  $\mathcal{G}_1,\ldots,\mathcal{G}_p$. 
\end{Definition}




Let  $\mathcal{G}_i=\{G_{i_1},\ldots,G_{i_m}\}$ be a principal subset of $\mathcal{G}$. Fix a vector $ w\in \operatorname{Sp}(V)$,  and let $C_{i_1},\ldots,C_{i_m}$ be the associated ISMs (defined in~\eqref{eq:defci}). Recall that 
$$\Co[C_{i_1},\ldots,C_{i_m} ]= \left \{\sum^m_{j=1}\alpha_{j}C_{i_j}\mid \alpha_{j}\ge 0\right \} $$
is the convex cone spanned by $C_{i_1},\ldots,C_{i_m}$, and $\Co(C_{i_1},\ldots,C_{i_m})$ is the interior of $\Co[C_{i_1},\ldots,C_{i_m}]$. 
With a slight abuse of notation, we write $$\Co[\mathcal{G}_i]:=\Co[C_{i_1},\ldots,C_{i_m}],$$ and $$\Co(\cal{G}_i) :=  \Co(C_{i_1},\ldots,C_{i_m}).$$   
Equipped with definitions and notations above, we now prove the following result:

\begin{pro}\label{4lem1}
Let $G$ be a strongly connected digraph with $n$ vertices for $n> 1$, and let $\mathcal{G}_1,\ldots, \mathcal{G}_p$ be principal subsets of $\mathcal{G}$. Let $ w\in \operatorname{Sp}(V)$. Then, $$\mathbb{A}_{G}(w) = \cup^p_{i=1}\Co(\cal{G}_i).$$  
\end{pro}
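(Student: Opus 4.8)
The plan is to prove the two inclusions $\cup_{i=1}^p \Co(\mathcal{G}_i) \subseteq \mathbb{A}_G(w)$ and $\mathbb{A}_G(w) \subseteq \cup_{i=1}^p \Co(\mathcal{G}_i)$ separately, using Proposition~\ref{CH} as the main tool (so that every $A \in \widehat{\mathbb{A}}_G(w) = \Co[C_1,\ldots,C_k]$ can be written $A = \sum_{j=1}^k \alpha_j C_j$ with $\alpha_j \geq 0$, and this representation is essentially unique because the $C_j$ span extreme rays). The key bookkeeping device is this: for $A = \sum_j \alpha_j C_j$, the off-diagonal support of $A$ — i.e. the induced digraph on the edge set $\{j \to k : a_{jk} > 0\}$ — is exactly $\cup_{j : \alpha_j > 0} E_j$, the union of the edge sets of those cycles with positive coefficient. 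This is because the $C_j$ have non-negative off-diagonal entries, so no cancellation can occur among them off the diagonal. Given this, $A$ has a simple zero eigenvalue if and only if its induced digraph is rooted (using the fact, recalled in the excerpt, that an ISM whose digraph is rooted has a simple zero eigenvalue, and conversely a non-rooted digraph forces at least two zero eigenvalues since $\mathbf{1}$ lies in the kernel but there is a second independent kernel vector). Since $G$ is strongly connected, "rooted" is equivalent to "strongly connected" for the induced digraph on $V$. Hence $A \in \mathbb{A}_G(w)$ iff $\cup_{j : \alpha_j > 0} E_j$ strongly connects $V$, i.e. iff $\{G_j : \alpha_j > 0\}$ contains a principal subset.

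For the inclusion $\cup_i \Co(\mathcal{G}_i) \subseteq \mathbb{A}_G(w)$: take $A \in \Co(\mathcal{G}_i)$, so $A = \sum_{j} \alpha_j C_{i_j}$ with all $\alpha_j > 0$ where $\mathcal{G}_i = \{G_{i_1},\ldots,G_{i_m}\}$. Then $A^\top w = 0$ by Proposition~\ref{CH}, the off-diagonal support of $A$ is $\cup_j E'_{i_j}$, which is strongly connected by the definition of principal subset, so $A$ has a simple zero eigenvalue and $A \in \mathbb{A}_G(w)$. Conversely, for $\mathbb{A}_G(w) \subseteq \cup_i \Co(\mathcal{G}_i)$: take $A \in \mathbb{A}_G(w) \subseteq \widehat{\mathbb{A}}_G(w)$, write $A = \sum_{j=1}^k \alpha_j C_j$ with $\alpha_j \geq 0$, and let $S = \{j : \alpha_j > 0\}$. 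By the support observation and the simple-zero-eigenvalue hypothesis, $\{G_j : j \in S\}$ strongly connects $V$. I would then argue that $\{G_j : j \in S\}$ is itself a principal subset — by definition it strongly connects $V$, so it contains at least one principal subset, and I need it to actually equal one; here the right statement is that $S$ is the index set of some $\mathcal{G}_i$, which holds because a principal subset is defined merely as any subset of $\mathcal{G}$ whose union of edge sets strongly connects $G$ (there is no minimality requirement in the definition given). Thus $A = \sum_{j \in S} \alpha_j C_j$ with every coefficient strictly positive lies in $\Co(\mathcal{G}_i)$ for $\mathcal{G}_i = \{G_j : j \in S\}$.

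The main obstacle I anticipate is making the "support = union of active cycle edge sets" claim airtight, and in particular verifying that the decomposition $A = \sum_j \alpha_j C_j$ is unique enough that the active set $S$ is well-defined independently of the chosen decomposition. Proposition~\ref{CH} gives that each $\{\alpha C_i\}$ is an extreme ray, but extreme rays alone do not guarantee unique conic representation in general; however, the off-diagonal support argument sidesteps this entirely — regardless of which decomposition one picks, the edge $j \to k$ is in the support of $A$ iff $a_{jk} > 0$ iff some cycle $G_\ell$ through $j \to k$ has $\alpha_\ell > 0$, and since a given edge lies in a cycle $G_\ell$ only for a specific structure, the set $\cup_{\ell \in S} E_\ell$ is pinned down by $A$ itself. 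One subtlety to handle carefully: two distinct cycles can share an edge, so $\alpha_\ell > 0$ for the specific $\ell$ need not be forced by $a_{jk}>0$ alone, but the union $\cup_{\ell\in S}E_\ell$ equals the support regardless, which is all that is needed for the rootedness criterion. I would also double-check the boundary case analysis — matrices on the boundary of some $\Co[\mathcal{G}_i]$ but in the interior of another — to confirm the union on the right is genuinely a union of relative interiors and that every $A \in \mathbb{A}_G(w)$ lands in exactly one such relative interior, namely the one indexed by its own active set.
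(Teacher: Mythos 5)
Your overall route is essentially the paper's: prove both inclusions, use Proposition~\ref{CH} to write any $A\in\mathbb{A}_G(w)\subseteq\widehat{\mathbb{A}}_G(w)$ as a non-negative combination $\sum_j\alpha_j C_j$, note that the off-diagonal support of $A$ is exactly the union of the edge sets of the cycles with $\alpha_j>0$ (no cancellation is possible), and reduce the question to whether that union strongly connects $V$. Your remark that the definition of principal subset carries no minimality requirement, so the active family of cycles is itself a principal subset, is precisely what the paper uses, and your point that the possible non-uniqueness of the conic decomposition is harmless because the support is pinned down by $A$ itself is a useful observation the paper leaves implicit.

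There is, however, one genuine flaw in the justification of the pivotal step. You claim that, because $G$ is strongly connected, ``rooted'' and ``strongly connected'' are equivalent for the induced digraph $G_A$ on $V$. That is false in general: a rooted subgraph of a strongly connected digraph need not be strongly connected (in the $2$-cycle with edges $1\to2$ and $2\to1$, the subgraph consisting of the single edge $1\to2$ is rooted at vertex $2$ but not strongly connected). Strong connectivity of $G$ plays no role here; what saves the argument is a property of $G_A$ itself, and you need to state it. Since the support of $A$ is a union of directed cycles, every edge of $G_A$ lies on a cycle, so no edge of $G_A$ can join two distinct strongly connected components; hence if $G_A$ is rooted it must be strongly connected (and has no isolated vertices). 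Alternatively, argue as the paper does: $A\in\mathbb{A}_{G_A}(w)$ with $w\in\Sp(V)$, so $V_w=V$, and Proposition~\ref{PMS} (equivalently Lemma~\ref{2lem2}) applied to the rooted digraph $G_A$ forces $G_A$ to be strongly connected. Either repair closes the gap; as written, the step does not follow from the reason you give. A minor side remark: for this proposition you only need each $A$ to land in \emph{some} $\Co(\cal{G}_i)$; the ``exactly one relative interior'' concern you raise at the end belongs to the minimal-cover statement, Proposition~\ref{mrg}, not here.
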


\begin{proof}
We first show that each set $\Co(\cal{G}_i)$ is  contained in $\mathbb{A}_{G}(w)$. Suppose that $\mathcal{G}_i$ is comprised of cycles $G_{i_1},\ldots, G_{i_m}$. For any matrix  $A$ in $\Co(\cal{G}_i)$,  there exists a set of positive coefficients $\alpha_{1},\ldots, \alpha_{m}$ such that 
\begin{equation*}
A = \sum^m_{j = 1}\alpha_{j} C_{i_j}
\end{equation*}
Let $G_A$ be the digraph induced by matrix $A$; then $G_A$ is strongly connected by definition of $\cal{G}_i$. Consequently, the matrix $A$ has a simple zero eigenvalue. Furthermore, we have 
\begin{equation*}
C^{\top}_{i_j}  w = 0
\end{equation*} 
for all $j=1,\ldots, m$. So then, $A^{\top}  w = 0$, and hence $A\in \mathbb{A}_{G}(w)$. We have thus proved that  $\Co(\cal{G}_i)$ is  contained in $\mathbb{A}_{G}(w)$.

Next we show that the set $\mathbb{A}_{G}(w)$ is contained in the union of $\Co(\cal{G}_{1}),\ldots, \Co(\cal{G}_{p})$.  Let $A$ be a matrix in $\mathbb{A}_{G}(w)$; then $A$ is also contained in $\widehat{\mathbb{A}}_{G}(w)$. Thus by Proposition~\ref{CH}, there is a set of non-negative coefficients $\alpha_1,\ldots,\alpha_k$ such that 
\begin{equation}\label{eq:Adefconc}
A = \sum^k_{i=1} \alpha_i C_i 
\end{equation} 
Suppose $\alpha_{i_1},\ldots, \alpha_{i_m}$ are the non-zero coefficients out of $\alpha_1,\ldots, \alpha_k$, we then let $\mathcal{G}':=\{ G_{i_1},\ldots,G_{i_m}\}$. We need to show that $\mathcal{G}'$ is a principal subset of $\mathcal{G}$. 
Let $G_A$ be the digraph induced by $A$.  It follows from~\eqref{eq:Adefconc} and~\eqref{eq:defci} that $G_A$ is the union of $G_{i_1},\ldots,G_{i_m}$. It now suffices to show that $G_A$ is strongly connected. Suppose that it is not the case; then by Proposition~\ref{PMS}, the set $\mathbb{A}_{G}(w)$ is empty because $V_w = V$ which is not a relevant set of $G_A$. On the other hand, $A\in \bb{A}_G(w)$, which is a contradiction. Hence, $G_A$ is strongly connected, and thus $\mathcal{G}'$ is a principal subset of $\mathcal{G}$. 
\end{proof}

We note that in general, the right-hand-side of the decomposition of $\bb{A}_G(w)$  given in Proposition~\ref{4lem1} may contain redundant terms.  Indeed, we might have for some $i$ that   
\begin{equation*}
\Co(\cal{G}_i) \subseteq \cup_{j\neq i}\Co(\cal{G}_j)
\end{equation*} 
To rule out this redundancy, we introduce the following definition:

\begin{Definition}[Minimal Cover]
For a collection of arbitrary sets  $\{\cal{A}_i\}^p_{i=1}$,  let $\cal{A} = \cup_{i=1}^p \cal{A}_i$. We say that $\cal{A}_{i_1},\ldots,\cal{A}_{i_l}$ is a {\bf minimal cover} of   $\cal{A}$ with respect to the collection $\{\cal{A}_i\}^p_{i=1}$  if $$\cal{A} = \cup_{j=1}^l \cal{A}_{i_j}$$ and $l$ is the least integer for the relation above to hold.

\end{Definition}

For the collection of sets $\Co(\cal{G}_1),\ldots,\Co(\cal{G}_p)$, it should be clear that the minimal cover of $\bb{A}_G(w)$ always exists. 
We now show that, quite surprisingly, there exists a {\it unique} minimal cover of $\bb{A}_G(w)$ and its components are \emph{pairwise disjoint}. To do so, we define a partial order over the set of principal subsets as follows: let us introduce the notation 
\begin{equation*}
\label{eq:defpartialorder}
\cal{G}_i \succ \cal{G}_j \mbox{ if } \Co(\cal{G}_i)\supsetneq \Co(\cal{G}_j).\end{equation*} A principal subset $\cal{G}'$ is said to be a {\bf maximal element} with respect to the partial order if there does not exist another principal subset $\cal{G}''$ such that $\cal{G}''\succ \cal{G}'$. 
We label  $\cal{G}^*_{1}, \ldots, \cal{G}^*_{l}$ as the maximal elements.
With the definitions above, we establish the following result. 

\begin{pro}\label{mrg}
The sets  $\Co(\cal{G}^*_{1}),\ldots, \Co(\cal{G}^*_{l})$ are pairwise disjoint. Moreover,
\begin{equation}\label{eq:mindecompAG}
\mathbb{A}_G(w) = \Co(\cal{G}^*_{1}) \cup\ldots \cup \Co(\cal{G}^*_{l})  
\end{equation}
is the unique minimal cover of $\mathbb{A}_G(w)$.  
\end{pro}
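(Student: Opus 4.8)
The plan is to refine the decomposition $\mathbb{A}_G(w)=\cup_{i=1}^p\Co(\cal{G}_i)$ of Proposition~\ref{4lem1} by tracking, for each $A\in\mathbb{A}_G(w)$, the digraph $G_A$ that $A$ induces (with $j\to k$ an edge of $G_A$ exactly when $a_{jk}>0$). The first observation is that $A\mapsto G_A$ is \emph{constant} on each $\Co(\cal{G}_i)$: if $A=\sum_{G_s\in\cal{G}_i}\alpha_sC_s$ with every $\alpha_s>0$, then since the off-diagonal entries of the $C_s$ are non-negative there is no cancellation, so the off-diagonal support of $A$ equals $\bigcup_{G_s\in\cal{G}_i}E_s$, which by the definition of a principal subset is the edge set of a strongly connected spanning subgraph of $G$. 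For a strongly connected spanning subgraph $H\subseteq G$, write $\cal{G}[H]$ for the set of all cycles of $G$ contained in $H$; a simple-path argument shows every edge of $H$ lies in such a cycle, so $\cal{G}[H]$ is a principal subset with edge-union $E_H$. The core step is the identity
\begin{equation*}
\Co(\cal{G}[H])=\{A\in\mathbb{A}_G(w)\mid G_A=H\}.
\end{equation*}
The inclusion ``$\subseteq$'' follows from the constancy observation together with Proposition~\ref{4lem1}. The reverse inclusion is the step I expect to be the main obstacle: given $A\in\mathbb{A}_G(w)$ with $G_A=H$, one must represent $A$ as a conic combination of the $C_s$ with $G_s\subseteq H$ in which \emph{every} coefficient is strictly positive. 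The idea is to subtract $\epsilon\sum_{G_s\subseteq H}C_s$ from $A$ for small $\epsilon>0$; since every edge of $H$ carries positive weight in $A$, the difference is still an ISM annihilating $w$, so by Proposition~\ref{CH} it is a non-negative combination of the $C_s$ (only cycles inside $H$ can appear, again by no cancellation), and adding $\epsilon\sum_{G_s\subseteq H}C_s$ back pushes every coefficient to at least $\epsilon$. Combined with Proposition~\ref{PMS} — which forces $G_A$ to be strongly connected and spanning for every $A\in\mathbb{A}_G(w)$ — this yields the disjoint decomposition
\begin{equation*}
\mathbb{A}_G(w)=\bigsqcup_{H}\Co(\cal{G}[H]),
\end{equation*}
the union ranging over all strongly connected spanning subgraphs $H$ of $G$.

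Next I would pin down the maximal elements. If $\Co(\cal{G}_i)\cap\Co(\cal{G}_j)\neq\emptyset$, constancy of $A\mapsto G_A$ forces the associated subgraphs to agree, say both equal $H$; then $\cal{G}_i,\cal{G}_j\subseteq\cal{G}[H]$ and, by the identity above, $\Co(\cal{G}_i),\Co(\cal{G}_j)\subseteq\Co(\cal{G}[H])$. Hence $\Co(\cal{G}[H])\supseteq\Co(\cal{G}_i)$ for every principal subset $\cal{G}_i$ sharing its induced subgraph, so each $\cal{G}[H]$ is maximal; conversely any maximal element $\cal{G}^*$ must satisfy $\Co(\cal{G}^*)=\Co(\cal{G}[H^*])$, where $H^*$ is its induced subgraph. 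To upgrade this to $\cal{G}^*=\cal{G}[H^*]$ I would take closures: the closure of $\Co(\cal{G}_i)$ is $\Co[\cal{G}_i]$ (add a uniform $\delta>0$ to the coefficients), so the finitely generated cones $\Co[\cal{G}^*]$ and $\Co[\cal{G}[H^*]]$ coincide; both are pointed (if $\pm A$ are ISMs then $A=0$), hence have the same extreme rays, and Proposition~\ref{CH} says every generator $C_s$ is an extreme ray while distinct cycles give non-proportional $C_s$ — so the two generating cycle sets agree. Thus the maximal elements are exactly the $\cal{G}[H]$, they correspond bijectively to the strongly connected spanning subgraphs of $G$, distinct maximal elements have \emph{disjoint} $\Co(\cdot)$'s (not merely distinct ones), and their union is all of $\mathbb{A}_G(w)$ by the partition above; this establishes the pairwise-disjointness assertion and the identity~\eqref{eq:mindecompAG}.

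For minimality and uniqueness of the cover, I would argue on the partition. Pick one representative $A_H\in\Co(\cal{G}[H])$ for each of the $l$ strongly connected spanning subgraphs $H$. Any sub-collection of $\{\Co(\cal{G}_i)\}_{i=1}^p$ covering $\mathbb{A}_G(w)$ must, for each $H$, contain some $\Co(\cal{G}_i)$ whose induced subgraph is $H$ (to cover $A_H$), and the index sets of such $\cal{G}_i$ are disjoint across distinct $H$; hence any cover has at least $l$ members, so $l$ is least possible. If a cover has exactly $l$ members, it contains exactly one set $\Co(\cal{G}_i)$ with induced subgraph $H$ for each $H$, and since every point of $\Co(\cal{G}[H])$ has induced subgraph $H$ it can be covered only by that set; therefore $\Co(\cal{G}_i)\supseteq\Co(\cal{G}[H])$, forcing $\Co(\cal{G}_i)=\Co(\cal{G}[H])$ and then $\cal{G}_i=\cal{G}[H]$ by the extreme-ray argument. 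So the unique minimal cover is $\{\Co(\cal{G}[H])\}_H=\{\Co(\cal{G}^*_1),\dots,\Co(\cal{G}^*_l)\}$, as claimed.
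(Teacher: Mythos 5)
Your proposal is correct, but it takes a genuinely different route from the paper's. The paper works with the partial order $\succ$ on principal subsets and hinges on Lemma~\ref{parlem0} and the trichotomy of Lemma~\ref{parlem1} (proved in the Appendix via the linear spans $L(\cal{G}_i)$): pairwise disjointness of the maximal cones is obtained by contradiction from that dichotomy, and uniqueness of the minimal cover by an injective assignment of maximal elements to the members of any putative minimal cover. You instead exploit the invariant $A\mapsto G_A$: since the off-diagonal entries of the $C_s$ are non-negative there is no cancellation, so $G_A$ is constant on each $\Co(\cal{G}_i)$, and your fiber identity $\Co(\cal{G}[H])=\{A\in\bb{A}_G(w)\mid G_A=H\}$ (proved by the $\epsilon$-subtraction step together with Proposition~\ref{CH}, with Proposition~\ref{PMS} guaranteeing, exactly as in the paper's proof of Proposition~\ref{4lem1}, that $G_A$ is strongly connected) upgrades Proposition~\ref{4lem1} to a genuine partition of $\bb{A}_G(w)$ indexed by the strongly connected spanning subgraphs $H$ of $G$. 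Disjointness, minimality and uniqueness then follow by counting fiber representatives, and the identification $\cal{G}^*=\cal{G}[H^*]$ needs only the extremality assertion of Proposition~\ref{CH}; in fact your closure/extreme-ray step can be shortened: since $\cal{G}^*\subseteq\cal{G}[H^*]$ automatically, any $G_s\in\cal{G}[H^*]-\cal{G}^*$ would give $C_s\in\Co[\cal{G}^*]$, i.e.\ a non-negative combination of the other generators, contradicting Proposition~\ref{CH} directly. What your route buys: it bypasses Lemma~\ref{parlem1} and the Appendix entirely, turns disjointness into a tautology of the partition rather than a case analysis, and yields structural information the paper does not state, namely that the maximal principal subsets are exactly the cycle-saturated ones $\cal{G}[H]$ and are in bijection with the strongly connected spanning subgraphs of $G$. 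What the paper's route buys is the finer pairwise comparison of arbitrary nested principal subsets (Lemma~\ref{parlem1}), which has independent interest but is not needed for Proposition~\ref{mrg} itself.
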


To prove Proposition~\ref{mrg}, we first establish the following fact:

\begin{lem}\label{parlem0}
If $\cal{G}_i \succ \cal{G}_j$, then $\cal{G}_i \supsetneq \cal{G}_j$. 
\end{lem}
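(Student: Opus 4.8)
The plan is to show that $\cal{G}_i \succ \cal{G}_j$ forces every cycle of $\cal{G}_j$ to already appear in $\cal{G}_i$, and then rule out equality of the two sets. Write $\cal{G}_i = \{G_{i_1},\ldots,G_{i_r}\}$ with associated ISMs $C_{i_1},\ldots,C_{i_r}$, and similarly $\cal{G}_j = \{G_{j_1},\ldots,G_{j_s}\}$ with ISMs $C_{j_1},\ldots,C_{j_s}$; fix the vector $w \in \operatorname{Sp}(V)$ used to build them. By hypothesis $\Co(\cal{G}_i) \supsetneq \Co(\cal{G}_j)$, so in particular $\Co(\cal{G}_j) \subseteq \Co[\cal{G}_i]$. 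Pick any matrix $A$ in the interior $\Co(\cal{G}_j)$, say $A = \sum_t \beta_t C_{j_t}$ with all $\beta_t > 0$; then $A$ also lies in $\Co[\cal{G}_i]$, so $A = \sum_u \gamma_u C_{i_u}$ with all $\gamma_u \ge 0$.

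The key step is to compare the two cycle-expansions of $A$ via the digraph $G_A$ induced by $A$. As in the proof of Proposition~\ref{4lem1}, formula~\eqref{eq:defci} shows that the off-diagonal support of a non-negative combination $\sum \lambda_m C_m$ is exactly the union of the edge sets of those cycles $G_m$ with $\lambda_m > 0$. Hence from $A = \sum_t \beta_t C_{j_t}$ (all $\beta_t>0$) we get that $G_A$ is the union of $G_{j_1},\ldots,G_{j_s}$, while from $A = \sum_u \gamma_u C_{i_u}$ we get that $G_A$ is the union of those $G_{i_u}$ with $\gamma_u > 0$. Since each individual cycle $G_{j_t}$ is a subgraph of $G_A$, and $G_A$ is a union of cycles drawn from $\cal{G}_i$, and a cycle that is contained in a union of cycles of $G$ and is itself a cycle must coincide with one of those cycles (this is the same elementary observation used for the extreme-ray argument in Proposition~\ref{CH}: a cycle has in-degree and out-degree one at every vertex, so it cannot be a proper subgraph of another cycle, and a minimal closed walk inside a union of cycles is one of them), each $G_{j_t}$ equals some $G_{i_u}$ with $\gamma_u > 0$. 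Therefore $\cal{G}_j \subseteq \cal{G}_i$.

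It remains to upgrade $\subseteq$ to $\subsetneq$. If $\cal{G}_j = \cal{G}_i$, then trivially $\Co(\cal{G}_i) = \Co(\cal{G}_j)$, contradicting $\Co(\cal{G}_i) \supsetneq \Co(\cal{G}_j)$. Hence $\cal{G}_i \supsetneq \cal{G}_j$, which is the claim.

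I expect the main obstacle to be the careful justification of the support identity — that the off-diagonal zero/nonzero pattern of $\sum_m \lambda_m C_m$ with $\lambda_m \ge 0$ is precisely the union of the edge sets of the cycles with $\lambda_m > 0$, with no cancellation. This is immediate from~\eqref{eq:defci} because all nonzero entries of every $C_m$ are strictly positive reciprocals $1/w_j$ (here positivity of all $w_j$, i.e.\ $w \in \operatorname{Sp}(V)$, is used), so a sum with non-negative coefficients never cancels an off-diagonal entry. Once that is granted, the containment of an individual cycle $G_{j_t}$ in the union $G_A$ of cycles from $\cal{G}_i$ forcing $G_{j_t} \in \cal{G}_i$ is the standard rigidity of cycles, and the rest is bookkeeping.
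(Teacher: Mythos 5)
Your support identity (no cancellation of off-diagonal entries in a non-negative combination of the $C_m$'s, since all their off-diagonal entries are positive when $w\in\Sp(V)$) is fine, but the rigidity step that carries the whole proof is false: a cycle contained in a \emph{union} of cycles need not coincide with one of them. Concretely, take $V=\{1,2,3\}$ with all six edges present and $w=(1/3,1/3,1/3)$. The $2$-cycle $1\to 2\to 1$ is a subgraph of the union of the two directed $3$-cycles $1\to2\to3\to1$ and $1\to3\to2\to1$, yet equals neither. Worse for your argument, writing $C_{12},C_{23},C_{13}$ for the ISMs of the three $2$-cycles and $C_{123},C_{132}$ for those of the two $3$-cycles, one checks $C_{12}+C_{23}+C_{13}=C_{123}+C_{132}$; so the matrix $A=C_{12}+C_{23}+C_{13}$ lies in $\Co(\cal{G}_j)$ for $\cal{G}_j$ the set of $2$-cycles and simultaneously in $\Co[\cal{G}_i]$ for $\cal{G}_i$ the set of $3$-cycles, even though $\cal{G}_j\not\subseteq\cal{G}_i$. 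Your proof only uses the existence of one matrix in $\Co(\cal{G}_j)\cap\Co[\cal{G}_i]$, and this example shows that such a common point cannot force $\cal{G}_j\subseteq\cal{G}_i$. (Here $\cal{G}_i\succ\cal{G}_j$ fails for dimension reasons, so the lemma itself is safe; the problem is that your argument does not use the full hypothesis.) The observation you cite from Proposition~\ref{CH} is the \emph{opposite} containment: there, cycles appearing with positive coefficients in an expansion of $C_i$ have their edges inside the \emph{single} cycle $G_i$, and a cycle contained in one cycle must equal it.

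The repair is to run your two-expansion comparison on the generators rather than on an interior point, which is in substance the paper's proof. From $\Co(\cal{G}_i)\supsetneq\Co(\cal{G}_j)$, taking closures gives $\Co[\cal{G}_i]\supseteq\Co[\cal{G}_j]$, so each $C_{j_t}$ (not just an interior combination) can be written as a non-negative combination of the $C_{i_u}$'s; your support identity then says every cycle $G_{i_u}$ appearing with a positive coefficient has its edges inside the single cycle $G_{j_t}$, hence equals $G_{j_t}$, and therefore $G_{j_t}\in\cal{G}_i$. Equivalently, and this is how the paper phrases it, if some $G_k\in\cal{G}_j-\cal{G}_i$ then $C_k\in\Co[\cal{G}_j]\subseteq\Co[\cal{G}_i]$ would express $C_k$ as a non-negative combination of the ISMs of cycles different from $G_k$, contradicting the extreme-ray statement of Proposition~\ref{CH}. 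Your final step, upgrading $\subseteq$ to $\subsetneq$, is fine.
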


\begin{proof}
It suffices to show that $\cal{G}_i \supseteq \cal{G}_j$. Suppose it is not the case, and that there exists a cycle $G_k$ such that 
$
G_k \in \cal{G}_j - \cal{G}_i
$. 
By Proposition~\ref{CH},  $\{\alpha C_k\mid \alpha \ge 0\}$ is an extreme ray of  $\widehat{\bb{A}}_G(w)$. This, in particular, implies that  $C_k\notin \Co[\cal{G}_i]$. On the other hand, if $\cal{G}_i\succ \cal{G}_j$, then $\Co[\cal{G}_i]\supseteq \Co[\cal{G}_j]$. But then,
$$
C_{k}\in \Co[\cal{G}_i] - \Co[\cal{G}_j] = \emptyset
$$
which is a contradiction. This completes the proof.  
\end{proof}

The converse of Lemma~\ref{parlem0} is not true in general, i.e., we may not conclude  that $\cal{G}_i \succ \cal{G}_j$ from the condition that $\cal{G}_i \supsetneq \cal{G}_j$. However, we will be able to prove that if $\cal{G}_i\supsetneq \cal{G}_j$, then either $\cal{G}_i\succ\cal{G}_j$  or $\Co(\cal{G}_i) \cap \Co(\cal{G}_j) = \emptyset$.

For a given principal subset $\cal{G}' = \{G_{i_1},\ldots, G_{i_m}\}$, we denote by 
$$L(\cal{G}') := \left\{\sum_{j=1}^m \beta_j C_{i_j} \mid \beta_j \in \bb{R}\right\}$$ 
the linear space spanned by the matrices $\{C_{i_1},\ldots, C_{i_m}\}$.  Note that each $\Co(\cal{G}_i) $ is an open subset of  $L(\cal{G}_i)$. It should be clear that if $\cal{G}_i \supseteq \cal{G}_j$, then $L(\cal{G}_i)  \supseteq L(\cal{G}_j)$.  Furthermore, we have the following fact:


\begin{lem}\label{parlem1} 
Assume that  $\cal{G}_i \supsetneq \cal{G}_j$, and let $\cal{G}' := \cal{G}_i - \cal{G}_j$. Then, the following three properties hold:
\begin{itemize}
\item[1.] If $L(\cal{G}_i) = L(\cal{G}_j)$, then  $\cal{G}_i \succ \cal{G}_j$.
\item[2.] If $L(\cal{G}_i) \supsetneq L(\cal{G}_j)$ and $\Co(\cal{G}') \cap  L(\cal{G}_j) \neq \emptyset$, then $\cal{G}_i \succ \cal{G}_j$.
\item[3.] If $L(\cal{G}_i) \supsetneq L(\cal{G}_j)$ and $\Co(\cal{G}') \cap L(\cal{G}_j) = \emptyset$,  then $\Co(\cal{G}_i) \cap \Co(\cal{G}_j) = \emptyset$.
\end{itemize}
\end{lem}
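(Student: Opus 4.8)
The plan is to analyze the three cases in order, exploiting the fact that each $\Co(\mathcal{G}_i)$ is a (relatively) open convex cone inside the linear span $L(\mathcal{G}_i)$, together with the extreme-ray structure from Proposition~\ref{CH}. Throughout I write $\mathcal{G}' := \mathcal{G}_i - \mathcal{G}_j$, so that the generators of $\Co[\mathcal{G}_i]$ are the generators of $\Co[\mathcal{G}_j]$ together with those indexed by $\mathcal{G}'$, and $L(\mathcal{G}_i) \supseteq L(\mathcal{G}_j)$.

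For part~1, assume $L(\mathcal{G}_i) = L(\mathcal{G}_j)$. Since $\mathcal{G}_i \supsetneq \mathcal{G}_j$, pick a cycle $G_k \in \mathcal{G}'$; then $C_k \in L(\mathcal{G}_j)$, so $C_k = \sum_{G_{j_r} \in \mathcal{G}_j} \beta_r C_{j_r}$ for some reals $\beta_r$. I would argue this forces some generator of $\Co[\mathcal{G}_j]$ to lie strictly in the interior direction: more precisely, take any $A_0 \in \Co(\mathcal{G}_j)$, a positive combination of all the $C_{j_r}$; adding a small multiple of $C_k$ and re-expanding keeps all coefficients positive for small enough perturbation, and moreover I can then also absorb a small positive amount of $C_k$ as a genuine generator, showing that every point of $\Co(\mathcal{G}_j)$ is an interior point of $\Co[\mathcal{G}_i]$ that uses $C_k$ with positive coefficient — hence $\Co(\mathcal{G}_j) \subseteq \Co(\mathcal{G}_i)$. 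The inclusion is strict because, by the extreme-ray property of Proposition~\ref{CH}, $C_k$ itself (scaled into the relative interior suitably, or rather a point like $C_k + \epsilon\sum_r C_{j_r}$) lies in $\Co(\mathcal{G}_i)$ but cannot lie in $\Co(\mathcal{G}_j)$ since $C_k \notin \Co[\mathcal{G}_j]$. Thus $\mathcal{G}_i \succ \mathcal{G}_j$.

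For part~2, assume $L(\mathcal{G}_i) \supsetneq L(\mathcal{G}_j)$ and that there is a point $B \in \Co(\mathcal{G}') \cap L(\mathcal{G}_j)$. Then $B$ is a strictly positive combination $\sum_{G_k \in \mathcal{G}'} \gamma_k C_k$ which also lies in $L(\mathcal{G}_j)$. Given any $A_0 \in \Co(\mathcal{G}_j)$, the matrix $A_0 + t\,B$ lies in $L(\mathcal{G}_j)$ for all $t$, and for small $t > 0$ its expansion in the $C_{j_r}$ has positive coefficients (continuity), while it also equals $A_0 + t\sum \gamma_k C_k$, a positive combination of all generators of $\mathcal{G}_i$. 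This exhibits $A_0$ (after a harmless reparametrization, or by noting the combination can be rewritten with every $\mathcal{G}_i$-generator appearing positively) as a point of $\Co(\mathcal{G}_i)$, giving $\Co(\mathcal{G}_j) \subseteq \Co(\mathcal{G}_i)$; strictness again follows from the extreme-ray property applied to any $C_k$ with $G_k \in \mathcal{G}'$. Hence $\mathcal{G}_i \succ \mathcal{G}_j$.

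For part~3, assume $L(\mathcal{G}_i) \supsetneq L(\mathcal{G}_j)$ and $\Co(\mathcal{G}') \cap L(\mathcal{G}_j) = \emptyset$. Suppose for contradiction that some $A \in \Co(\mathcal{G}_i) \cap \Co(\mathcal{G}_j)$. Since $A \in \Co(\mathcal{G}_i)$, write $A = \sum_{G_{j_r}\in\mathcal{G}_j}\alpha_r C_{j_r} + \sum_{G_k \in \mathcal{G}'}\beta_k C_k$ with all $\alpha_r, \beta_k > 0$; the first sum lies in $L(\mathcal{G}_j)$, and since $A \in \Co(\mathcal{G}_j) \subseteq L(\mathcal{G}_j)$ as well, subtracting gives $\sum_{G_k\in\mathcal{G}'}\beta_k C_k \in L(\mathcal{G}_j)$ with all $\beta_k > 0$, i.e. a point of $\Co(\mathcal{G}')$ lying in $L(\mathcal{G}_j)$ — contradicting the hypothesis. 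Hence $\Co(\mathcal{G}_i) \cap \Co(\mathcal{G}_j) = \emptyset$.

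The main obstacle I anticipate is the bookkeeping in parts~1 and~2: namely, turning ``$A_0$ plus a small perturbation lies in $\Co[\mathcal{G}_i]$'' into the cleaner statement ``$A_0 \in \Co(\mathcal{G}_i)$'' (interior of the full cone, with every generator used positively). This requires care because the perturbation argument naturally produces a nearby point rather than $A_0$ itself; the fix is to observe that $\Co(\mathcal{G}_j)$ is relatively open in $L(\mathcal{G}_j) \subseteq L(\mathcal{G}_i)$, so one can first move $A_0$ slightly within $\Co(\mathcal{G}_j)$ to compensate, and then add the $\mathcal{G}'$-generators with small positive coefficients — making $A_0$ genuinely an interior point of $\Co[\mathcal{G}_i]$. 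The extreme-ray property from Proposition~\ref{CH} is what guarantees strictness of the inclusions and is used uniformly in all three parts.
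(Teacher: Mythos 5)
Your proposal is correct and takes essentially the same route as the paper: rewrite any point of $\Co(\cal{G}_j)$ as a strictly positive combination of all generators of $\cal{G}_i$ by shifting a small amount of weight onto the extra generators (using that they, or a positive combination of them, lie in $L(\cal{G}_j)$), use the extreme-ray property of Proposition~\ref{CH} for strictness, and settle case 3 by the same subtraction/contradiction argument. The only deviations are minor: the paper obtains strictness in case 2 from a dimension count ($\Co(\cal{G}_i)$ is open in $L(\cal{G}_i) \supsetneq L(\cal{G}_j)$) rather than an extreme-ray perturbation, and in your strictness witness you should put small positive weights on \emph{all} generators of $\cal{G}_i$ (not only one $C_k$ plus the $\cal{G}_j$ generators) and invoke closedness of $\Co[\cal{G}_j]$ (positive distance from $C_k$) so that the perturbed point remains outside, which is precisely the separation step the paper spells out.
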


We refer readers to the Appendix for a proof of Lemma~\ref{parlem1}. We are now in a position to prove Proposition~\ref{mrg}. 

\begin{proof}[Proof of Proposition~\ref{mrg}]
First we show that Relation~\eqref{eq:mindecompAG} holds.
Since the $\cal{G}^*_i$'s are maximal elements,  each set $\Co(\cal{G}_j)$, for $j = 1,\ldots,p$, is contained in some $\Co(\cal{G}^*_i)$ for $i=1,\ldots,l$. By Proposition~\ref{4lem1}, we have that
\begin{equation*}
\mathbb{A}_G(w) = \Co(\cal{G}_1)\cup\ldots\cup \Co(\cal{G}_p),
\end{equation*} from which~\eqref{eq:mindecompAG} follows.

Next, we will show that the    sets $\Co(\cal{G}^*_1),\ldots, \Co(\cal{G}^*_l)$ are pairwise disjoint. We prove this by contradiction. Assume that there is a matrix
\begin{equation*}
A  \in \Co(\cal{G}^*_i)\cap\Co(\cal{G}^*_j), \hspace{10pt} \mbox{for }i\neq j. 
\end{equation*} 
Set $\cal{G}':= \cal{G}^*_i \cup \cal{G}^*_j$. It is not hard to see that $A\in \Co(\cal{G}')$. Indeed, if we let $\cal{G}^*_i = \{G_{i_1},\ldots, G_{i_m}\}$ and $\cal{G}^*_j = \{G_{j_1},\ldots,G_{j_{m'}}\}$, then there are positive coefficients $\alpha_{i_t}$'s and $\beta_{j_s}$'s such that 
\begin{equation*}
A = \sum^m_{t = 1}\alpha_{i_t} C_{i_t} = \sum^{m'}_{s = 1}\beta_{j_s} C_{j_s}.
\end{equation*}
Using the previous equation,  we can express $A$ as
\begin{equation*}
A = \frac{1}{2} \left( \sum^m_{t = 1}\alpha_{i_t} C_{i_t} +  \sum^{m'}_{s = 1}\beta_{j_s} C_{j_s}\right ).
\end{equation*} which shows that $A \in \Co(\cal{G}')$. 
Since the $\cal{G}_k^*$'s are distinct, we have $\cal{G}' \supsetneq \cal{G}^*_i $ and/or $\cal{G}' \supsetneq \cal{G}^*_j $. We assume without loss of generality that the former holds. By Lemma~\ref{parlem1}, either $\cal{G}'\succ \cal{G}^*_i$ or $\Co(\cal{G}') \cap \Co(\cal{G}^*_i) = \emptyset$. But since 
$$
A\in \Co(\cal{G}') \cap \Co(\cal{G}^*_i),
$$
we conclude that $\cal{G}'\succ \cal{G}^*_i$ which then contradicts the fact that $\cal{G}^*_i$ is a maximal element. Thus, we have proved that $\Co(\cal{G}^*_1),\ldots, \Co(\cal{G}^*_l)$ are pairwise disjoint.

It now remains to show that~\eqref{eq:mindecompAG} is a minimal cover, and indeed it is the unique minimal cover. Let $\{\cal{G}'_1,\ldots, \cal{G}'_{l'}\}$ be any set of principal subsets such that
\begin{equation}\label{fakemr}
\mathbb{A}_G(w) =  \Co(\cal{G}'_1)\cup \ldots \cup  \Co(\cal{G}'_{l'})
\end{equation} 
is a minimal cover of $\mathbb{A}_G(w)$. Then, we  have $l' \leq l$.

Now to each $\cal{G}_i'$, we can assign a maximal element in $\overline{\cal{G}'_{i}} \in \{\cal{G}^*_1 , \ldots,\cal{G}^*_{l}\}$ such that $\overline{\cal{G}'_{i}} \succ\cal{G}'_{i}$. We claim that  any such assignment has to satisfy the condition that  if $i \neq j$, then $\overline{\cal{G}'_{i}} \neq \overline{\cal{G}'_{j}}$. To see this,  note that  for any pair $\cal{G}'_i,\cal{G}'_j$, there is no  principal subset $\cal{G}'$ such that  $\cal{G}' \succ \cal{G}'_i$ and $\cal{G}' \succ \cal{G}'_j$. Indeed, if it were the case, then $ \Co(\cal{G}'_i) \cup\Co(\cal{G}'_j) \subseteq \Co(\cal{G}') $ and we can replace $ \Co(\cal{G}'_i) \cup\Co(\cal{G}'_j)$ with $ \Co(\cal{G}')$, which contradicts the assumption that~\eqref{fakemr} is a  minimal cover of $\bb{A}_G(w)$. 

Since $\overline{\cal{G}'_{i}} \neq \overline{\cal{G}'_{j}}$ for all $i\neq j$, we can assume, without loss of generality, that $\overline{\cal{G}'_{i}} =\cal{G}^*_i$  for all $i=1,\ldots, l'$.  It should be clear that 
$$ \Co(\cal{G}'_i) \subseteq  \Co(\cal{G}^*_i) $$ 
and by Lemma~\ref{parlem0}, the equality holds if and only if $\cal{G}'_i = \cal{G}^*_i$. Since $ \Co(\cal{G}^*_1),\ldots, \Co(\cal{G}^*_l)$ are pairwise disjoint; we thus conclude that 
\begin{equation*}
 \Co(\cal{G}'_1)\cup \ldots \cup   \Co(\cal{G}'_{l'})\subseteq \Co(\cal{G}^*_1)\cup \ldots \cup   \Co(\cal{G}^*_l)
\end{equation*}
and the equality holds if and only if 
$$\{\cal{G}'_1,\ldots, \cal{G}'_{l'}\} = \{\cal{G}^*_1,\ldots, \cal{G}^*_{l}\}.$$  
In other words, we have shown that~\eqref{eq:mindecompAG} is the unique minimal cover of $\mathbb{A}_G(w)$.
\end{proof}

We conclude this sub-section by relating $\mathbb{A}_G(w)$ to $\widehat{\bb{A}}_{G}(w)$:

\begin{pro}\label{SC}
Let $G$ be a strongly connected digraph, and $ w$ be a vector in $\operatorname{Sp}(V)$.  Then, $\mathbb{A}_{G}(w)$ is a nonempty convex set and its closure is $\widehat{\mathbb{A}}_{G}(w)$. 
\end{pro}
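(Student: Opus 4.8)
The plan is to establish the three assertions of Proposition~\ref{SC} in the natural order: nonemptiness, convexity, and the identification of the closure. For \emph{nonemptiness}, I would invoke Proposition~\ref{4lem1}, which gives $\mathbb{A}_G(w) = \cup_{i=1}^p \Co(\cal{G}_i)$; since $G$ is strongly connected with $n>1$, it has at least one cycle, and the whole cycle set $\mathcal{G}$ is itself a principal subset (its edge union is all of $E$, hence strongly connected), so $p \ge 1$ and in particular $\Co(\cal{G}_1)$ is nonempty because each $\Co(\cal{G}_i)$ is the relative interior of a nontrivial convex cone. For \emph{convexity}, the decomposition $\mathbb{A}_G(w) = \cup_{i=1}^p \Co(\cal{G}_i)$ is a \emph{union}, so convexity is not automatic; here I would use Proposition~\ref{mrg} to reduce to the maximal elements, and then argue that there must in fact be a \emph{single} maximal element. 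Concretely: for any two principal subsets $\cal{G}_i, \cal{G}_j$ the union $\cal{G}_i \cup \cal{G}_j$ is again a principal subset (the union of two strongly connected spanning subgraphs on the same vertex set is strongly connected), and $\Co(\cal{G}_i\cup\cal{G}_j) \supseteq \Co(\cal{G}_i)$; so by the partial order, every principal subset is dominated by the maximal one obtained from $\mathcal{G}$ itself, forcing $l=1$. Then $\mathbb{A}_G(w) = \Co(\cal{G}^*_1)$, which is the interior of a convex cone and hence convex (indeed it equals $\Co(\cal{G})$, the cone spanned by \emph{all} the $C_i$, with all coefficients strictly positive).

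For the \emph{closure} statement, I would show $\overline{\mathbb{A}_G(w)} = \widehat{\mathbb{A}}_G(w)$ by a double inclusion. The inclusion $\overline{\mathbb{A}_G(w)} \subseteq \widehat{\mathbb{A}}_G(w)$ is immediate: $\mathbb{A}_G(w) \subseteq \widehat{\mathbb{A}}_G(w)$ by definition (the simple-zero-eigenvalue condition is dropped in the relaxed set), and $\widehat{\mathbb{A}}_G(w) = \Co[C_1,\dots,C_k]$ is closed by Proposition~\ref{CH}, so it contains the closure. For the reverse inclusion, take $A \in \widehat{\mathbb{A}}_G(w) = \Co[C_1,\dots,C_k]$, write $A = \sum_{i=1}^k \alpha_i C_i$ with $\alpha_i \ge 0$, and perturb to $A_\epsilon := A + \epsilon \sum_{i=1}^k C_i = \sum_{i=1}^k (\alpha_i + \epsilon) C_i$. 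For every $\epsilon > 0$ all coefficients are strictly positive, so $A_\epsilon \in \Co(C_1,\dots,C_k) = \Co(\cal{G}) = \mathbb{A}_G(w)$ by the single-maximal-element discussion above; letting $\epsilon \to 0$ gives $A \in \overline{\mathbb{A}_G(w)}$. Hence the closure is exactly $\widehat{\mathbb{A}}_G(w)$.

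The main obstacle I anticipate is the convexity step, specifically justifying that $\mathbb{A}_G(w)$ is a single cone $\Co(\cal{G})$ rather than a genuine union of several $\Co(\cal{G}_i)$'s. The subtlety is that $\Co(\cal{G}_i)$ is the \emph{interior} (relative interior in $L(\cal{G}_i)$), so one has to be careful that $\Co(\cal{G})$, defined with \emph{all} $k$ cycles and all strictly positive coefficients, indeed contains each $\Co(\cal{G}_i)$: given $A = \sum_{j} \alpha_{i_j} C_{i_j}$ with the $\alpha_{i_j}>0$ supported on a principal subset, one adds a small positive multiple of every remaining $C_i$ — but this changes $A$, so instead the right statement is that $\mathbb{A}_G(w)$ as a set equals $\Co(C_1,\dots,C_k)$, which follows because $\mathcal{G}$ is principal (so $\Co(\cal{G}) \subseteq \mathbb{A}_G(w)$ by the first half of Proposition~\ref{4lem1}) and conversely any $A \in \mathbb{A}_G(w)$ lies in $\widehat{\mathbb{A}}_G(w) = \Co[C_1,\dots,C_k]$ and has a strictly-positive-combination representation once one notes that each cycle in the support can be "thickened" using the other cycles through the same edges; alternatively one simply observes $\Co(\cal{G})\subseteq \mathbb{A}_G(w) \subseteq \Co[\cal{G}] = \overline{\Co(\cal{G})}$ and that $\mathbb{A}_G(w)$ is a union of relatively open cones each contained in $\Co[\cal{G}]$, so it is squeezed between $\Co(\cal{G})$ and its closure. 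Pinning down this squeezing argument cleanly — and in particular confirming $\mathbb{A}_G(w)$ itself is exactly the open cone, not something strictly larger — is where the care is needed; once that is in hand, convexity and the closure computation are routine.
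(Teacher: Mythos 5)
Your nonemptiness argument and your closure argument are fine and essentially the paper's (the perturbation $A+\epsilon\sum_i C_i$ lands in $\Co(\cal{G})\subseteq\mathbb{A}_G(w)$ because the full cycle set $\cal{G}$ is principal, and $\widehat{\mathbb{A}}_G(w)=\Co[C_1,\ldots,C_k]$ is closed). The convexity step, however, contains a genuine error. Your key claim is that $\Co(\cal{G}_i\cup\cal{G}_j)\supseteq\Co(\cal{G}_i)$, hence that every principal subset is dominated by $\cal{G}$ in the partial order, forcing a single maximal element and $\mathbb{A}_G(w)=\Co(C_1,\ldots,C_k)$. But these relatively open cones are \emph{not} monotone under enlarging the generating set: passing from $\cal{G}_i$ to $\cal{G}_i\cup\cal{G}_j$ requires strictly positive coefficients on the \emph{new} cycles as well, and an element of $\Co(\cal{G}_i)$ need not admit such a representation. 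This is exactly the point of Lemma~\ref{parlem1}: when $\cal{G}_i\supsetneq\cal{G}_j$ one may have $\Co(\cal{G}_i)\cap\Co(\cal{G}_j)=\emptyset$, and Proposition~\ref{mrg} is stated precisely because several pairwise disjoint maximal elements can occur ($l>1$). Concretely, if $G$ has a Hamiltonian cycle $G_1$ and at least one other cycle, then $C_1\in\mathbb{A}_G(w)$ (its induced digraph is strongly connected, so zero is a simple eigenvalue), yet $C_1\notin\Co(C_1,\ldots,C_k)$ because $\{\alpha C_1\mid\alpha\ge 0\}$ is an extreme ray of $\widehat{\mathbb{A}}_G(w)$ by Proposition~\ref{CH}. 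So $\mathbb{A}_G(w)$ is strictly larger than the open cone $\Co(\cal{G})$, your "$l=1$" conclusion is false, and the fallback squeezing argument ($\Co(\cal{G})\subseteq\mathbb{A}_G(w)\subseteq\Co[\cal{G}]$) proves neither the claimed identity nor convexity: a set sandwiched between an open convex set and its closure need not be convex, and here $\mathbb{A}_G(w)$ really does pick up boundary points of the cone.

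The repair is the paper's direct argument, which does not need any statement about nesting of the open cones: given $A_i\in\Co(\cal{G}_i)$, $A_j\in\Co(\cal{G}_j)$ and $\alpha_i,\alpha_j>0$, every cycle of $\cal{G}_i\cup\cal{G}_j$ receives a strictly positive coefficient in $\alpha_iA_i+\alpha_jA_j$, so this combination lies in $\Co(\cal{G}_i\cup\cal{G}_j)$; since the union of two principal subsets is principal, Proposition~\ref{4lem1} puts it in $\mathbb{A}_G(w)$, and convexity follows. With that substitution your closure and nonemptiness steps go through unchanged.
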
 

\begin{proof}
 If  $G$ consists of only one vertex, then $\mathbb{A}_{G}(w) = \widehat{\mathbb{A}}_{G}(w) = \{0\}$. Henceforth,  we assume that the number of vertices of $G$ is greater than one.

We first show that $\mathbb{A}_{G}(w)$ is a convex set. 
Let $A_i$ and $A_j$ be two matrices in $\Co(\cal{G}_i)$ and $\Co(\cal{G}_j)$ respectively; we show that for $\alpha_i$ and $\alpha_j$  positive,  the matrix $\alpha_i A_i + \alpha_j A_j$ is contained in $\mathbb{A}_{G}(w)$.  First note that the matrix  $\alpha_i A_i + \alpha_j A_j$ is  an element in $\Co(\cal{G}_i\cup\cal{G}_j) $.  Since $\cal{G}_i\cup\cal{G}_j$ is a principal subset, $\alpha_i A_i + \alpha_j A_j$ is contained in $\mathbb{A}_{G}(w)$.

It now remains to show that the closure of $\mathbb{A}_{G}(w)$ is $\widehat{\mathbb{A}}_{G}(w)$. First we note that $\mathbb{A}_{G}(w)$ is contained in $\widehat{\mathbb{A}}_{G}(w)$ while $\widehat{\mathbb{A}}_{G}(w)$ is a closed set. So the closure of $\mathbb{A}_{G}(w)$ must be contained in $\widehat{\mathbb{A}}_{G}(w)$. We now show that the converse is also true, that is $\widehat{\mathbb{A}}_{G}(w)$ is contained in the closure of $\mathbb{A}_{G}(w)$. Choose a matrix $A$ in $\widehat{\mathbb{A}}_{G}(w) - \mathbb{A}_{G}(w)$; then by Proposition~\ref{CH}, we have
\begin{equation*}
A = \sum^k_{i=1}\alpha_i C_i 
\end{equation*}
with each $\alpha_i$ non-negative. Since for each positive real number $\epsilon>0$, the matrix 
\begin{equation*}
A(\epsilon) :=  \sum^k_{i=1}(\alpha_i + \epsilon) C_i 
\end{equation*}
is contained in $\mathbb{A}_{G}(w)$, $A$ is in the closure of $\mathbb{A}_{G}(w)$. This  completes the proof.
\end{proof}

\subsection{Proof of Theorem~\ref{MAIN}}\label{3d}

We now return to the proof of Theorem~\ref{MAIN} stated in Section~\ref{DefThm}. 

\begin{proof}[Proof of Theorem~\ref{MAIN}]
Recall that $V_1,\ldots, V_q$ are the relevant subsets of $G$, and $\Sp(V_i)\subset \operatorname{Sp}[V]$ is the interior of the convex hull  spanned by  the unit vectors $\{ e_{j}\mid j \in V_i\}$. Also recall that $W$ is the set of vectors $ w$ in $\operatorname{Sp}[V]$ for which $\mathbb{A}_{G}(w)$ is  not empty.

From Proposition~\ref{PMS}, we know that if $\bb{A}_G(w)$ is not empty, then $V_w$ (defined in~\eqref{eq:defVw}) is relevant and thus, the set $W$ is contained in the union  $\cup^q_{i=1} \Sp(V_i)$. We  now show that the converse is also true. Let $V'$ be a relevant subset of $V$, and for simplicity,  assume that $V' = \{1,\ldots, m\}$ with $m\le n$. Let $G'$ be the subgraph of $G$ induced by  $V'$; then by the definition of relevant subset, $V'$ is contained in the root set of $G$ and $G'$ is strongly connected. 
Let $ w$ be a vector contained in $\Sp(V')$,  and let $ w'$ be the vector in $\mathbb{R}^m$ containing the first $m$ entries of $ w$, i.e., $ w  = ( w',0)$. 

We   now prove that $\mathbb{A}_{G}(w)$ is nonempty  by  constructing a matrix $A$ in it. We partition the matrix $A$ into four blocks as 
\begin{equation}\label{eq:pttA}
A = 
\begin{pmatrix}
A_{11} & A_{12}\\
A_{21} & A_{22}
\end{pmatrix}
\end{equation}
with $A_{11}$  an $m$-by-$m$ matrix. Let $\bb{A}_{G'}(w')$ be the set of $m$-by-$m$ ISMs associated with the digraph $G'$ and the vector $ w'$. Then by Proposition~\ref{SC}, the set $\bb{A}_{G'}(w')$ is nonempty, and hence we can pick  $A_{11}$  in $\bb{A}_{G'}(w')$. Let  $A_{12}$ be the zero matrix. 
Choose $A_{21}$ and $A_{22}$ such that if $i\to j$ is an edge of $G$ and if $i> m$, then the $ij$-th entry of $A$ is positive.  We now show that the resulting matrix $A$ is contained in $\mathbb{A}_{G}(w)$. First note that by the choice of $A_{11}$, we have
$
 A^{\top}_{11}  w' = 0
$, and hence $A^\top w = 0$.

It now suffices to show that $A$ has zero as a simple eigenvalue.  
Let $G_{A}$ be the digraph induced by the matrix $A$.  We will show that $G_A$ is rooted with  root set  $V'$. Since $A_{12} = 0$ by construction,  there is no edge $i\to j$ with $1\le i\le m$ and $j> m$.  Thus, for any vertex $j\notin V'$,  there is no path in $G_A$ from a vertex $i\in V'$ to $j$. So then, the root set of $G_A$ is a subset of $V'$. 
On the other hand, each vertex in $V'$ is a root of $G$, and by construction of $A_{21}$ and $A_{22}$, we know that if  $i\to j$, for  $i> m$,  is an edge of $G$, then it is also an edge  of  $G_A$. Thus, for any vertex $i\notin V'$, there is a path from $i$ to some vertex in $V'$. Since the subgraph $G'$ is strongly connected, the set $V'$ is the root set of $G_A$. Hence, $A$ has zero as a simple eigenvalue. This  completes the proof.
\end{proof}

\begin{cor}
Let $G$ be a rooted graph, and $w$ be such that $\bb{A}_G(w)$ is not empty. Then, 
\begin{itemize}
\item[1.] The digraph $G_A$ induced by any matrix $A\in \bb{A}_G(w)$ is rooted, and $V_w$ is the root set. \item[2.] The set $\bb{A}_G(w)$ is convex.
\end{itemize}
\end{cor}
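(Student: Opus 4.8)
The plan is to obtain both items from Proposition~\ref{PMS}, applied not to $G$ itself but to the digraph induced by a feasible matrix, together with the standard equivalence ``$A$ is an ISM with a simple zero eigenvalue $\iff$ the digraph induced by $A$ is rooted'' (the ``only if'' direction of which we already used in the proof of Lemma~\ref{2lem2}; see also \cite{RW}).

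For item~1, fix $A \in \mathbb{A}_G(w)$ and let $G_A$ be its induced digraph, which has vertex set $V$. Since $A$ has a simple zero eigenvalue, $G_A$ is rooted; and since moreover $A \in \mathbb{A}_{G_A}$ with $A^\top w = 0$, we have $A \in \mathbb{A}_{G_A}(w)$, so this set is nonempty. Applying Proposition~\ref{PMS} with $G_A$ in the role of $G$ then yields that $V_w$ is a relevant subset of $G_A$, and in particular $V_w$ is contained in the root set of $G_A$. For the opposite inclusion I would repeat the block argument from Lemma~\ref{2lem2}: after relabeling so that $V_w = \{1,\ldots,m\}$ and writing $w = (w',0)$ with $w'$ entrywise positive, the equation $A^\top w = 0$ forces the off-diagonal block $A_{12}$ to vanish (nonnegative entries tested against a strictly positive vector), so $G_A$ has no edge from $V_w$ to its complement. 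Hence any root of $G_A$ is reachable from a vertex of $V_w$ and, by the absence of escaping edges, must itself lie in $V_w$. Combining the two inclusions gives that the root set of $G_A$ is exactly $V_w$.

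For item~2 I would simply check that the three conditions defining $\mathbb{A}_G(w)$ survive convex combinations. Take $A, B \in \mathbb{A}_G(w)$ and $t \in (0,1)$, and set $C := tA + (1-t)B$. Closure of $\mathbb{A}_G$ under convex combinations is immediate (nonnegative off-diagonal entries, zero row sums, and the zero pattern prescribed by $G$ are all preserved), and $C^\top w = tA^\top w + (1-t)B^\top w = 0$. The only substantive point is that $C$ still has a simple zero eigenvalue; here the key observation is that the digraph $G_C$ induced by $C$ equals $G_A \cup G_B$, because for $i \neq j$ one has $c_{ij} > 0$ iff $a_{ij} > 0$ or $b_{ij} > 0$. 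Thus $G_C$ contains the rooted digraph $G_A$ as a spanning subgraph, and adjoining edges to a rooted digraph leaves it rooted, so $G_C$ is rooted and $C$ has a simple zero eigenvalue; hence $C \in \mathbb{A}_G(w)$.

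I do not expect a real obstacle: item~1 is essentially a re-reading of Lemmas~\ref{2lem1}--\ref{2lem2} once one notices they may be applied with $G_A$ replacing $G$, and item~2 rests only on the stability of rootedness under adding edges. The point to be careful about is which direction of the eigenvalue/rootedness equivalence is used where: in item~1 we need ``rooted'' (to invoke Proposition~\ref{PMS}), while in item~2 we need its converse (to conclude $C$ is feasible); both are standard, and the degenerate case of a single-vertex $G$ is handled trivially since then $\mathbb{A}_G(w) = \{0\}$.
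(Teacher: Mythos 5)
Your proposal is correct and follows essentially the same route as the paper: item 1 rests on the same two ingredients (the block argument forcing $A_{12}=0$, hence no edges of $G_A$ leaving $V_w$, together with the strong-connectivity/rootedness content of Lemmas~\ref{2lem1}--\ref{2lem2}, which you repackage by applying Proposition~\ref{PMS} to $G_A$), and item 2 is the paper's argument almost verbatim, namely that $G_C=G_A\cup G_B$ contains a rooted spanning subgraph and hence $C$ retains a simple zero eigenvalue. The only difference is organizational, and your explicit appeal to the equivalence between rootedness of the induced digraph and simplicity of the zero eigenvalue makes a step the paper leaves implicit.
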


\begin{proof}
Letting $A\in \bb{A}_G(w)$, we show that $G_A$ is rooted. For simplicity, but without loss of generality, we assume that $w = (w',0)$ with $w'\in\bb{R}^m$ containing the nonzero entries of $w$. Partition $A$ into blocks as we did in~\eqref{eq:pttA}. Then,  $A_{12} = 0$ because 
$A_{12}^\top w' =0$. Thus, the root set of $G_{A}$ is a subset of $V_w$.    
On the other hand, from Lemma~\ref{2lem2}, we know that the subgraph $G_w$ induced by  $V_w$ is strongly connected. Thus, the root set of $G_{A}$ is $V_w$.

Now let $A'$ and $A''$ be in $\mathbb{A}_{G}(w)$, and let 
\begin{equation*}
A := \alpha' A' + \alpha'' A''
\end{equation*} 
with $\alpha'$ and $\alpha''$ positive. We show that $A\in \bb{A}_G(w)$. It should be clear that $A^\top  w = 0$. It remains to show that $A$ has zero as a simple eigenvalue. This holds because the two induced digraphs  $G_{A'}$ and $G_{A''}$ are rooted. Hence, the digraph $G_A$, as a union of $G_{A'}$ and $G_{A''}$, is also rooted.  Thus, we conclude that $A$ has a simple zero eigenvalue with $A^{\top}  w =0$ and hence,  $A\in \mathbb{A}_G(w)$. This completes the proof.
\end{proof}

\section{Decentralized Implementation of $w$-Feasible Dynamics}
In this section, we will assume that $G $ is a rooted digraph, and $ w$ is a vector for which the set $\mathbb{A}_{G}(w)$ is nonempty. We present here a decentralized algorithm that allows the agents to find a matrix $A$ in $\mathbb{A}_{G}(w)$. In particular, we assume  that each agent $ x_i$ only knows its own weight $w_i$, and each agent is only able to communicate/cooperate with its neighbors, which are defined as the agents connected to $i$ with either an incoming or an outgoing edge. 

The implementation of the algorithm derived here relies on decentralized methods for the so-called \emph{graph balancing problem} for the digraph $G$ (referred to as $G$-balancing). We say that the coefficients  $b_{ij} \geq 0$, for $i\to j\in E$, form a solution of $G$-balancing if for each vertex $i$ of $G$, we have
\begin{equation*}
\sum_{k\in V^+_i}b_{ki} = \sum_{j\in V^-_i} b_{ij} 
\end{equation*} 
with $V^+_i$ and $V^-_i$ being respectively the incoming and the outgoing neighbors of vertex~$i$. We call a solution positive (resp. non-negative) if the $b_{ij}$'s are strictly positive (resp. non-negative).

\begin{lem}\label{7lem1}
Let $G$ be a digraph, and let $ w$ be a vector in $\operatorname{Sp}(V)$. Let $\mathbb{B}_{\ge 0}$ be the set of non-negative solutions of the $G$-balancing problem. Let $\Lambda$ be a diagonal matrix with $\Lambda_{ii}=w_i$. Then  $\mathbb{B}_{\ge 0}=\Lambda^{-1}\widehat{\mathbb{A}}_{G}(w)$.   
\end{lem}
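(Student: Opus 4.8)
The plan is to establish the claimed identity $\mathbb{B}_{\ge 0} = \Lambda^{-1}\widehat{\mathbb{A}}_{G}(w)$ by unwinding both definitions in terms of the off-diagonal entries and the matrix $\Lambda$, and showing the two membership conditions are literally the same condition. First I would observe that for any matrix $A = (a_{ij}) \in \mathbb{A}_G$, the matrix $B := \Lambda^{-1} A$ has off-diagonal entries $b_{ij} = a_{ij}/w_i$; since $w_i > 0$ for all $i$ (because $w \in \operatorname{Sp}(V)$), the sign pattern is preserved, so $b_{ij} \ge 0$ exactly when $a_{ij} \ge 0$, and the zero pattern of $B$ matches the edge set $E$ of $G$. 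Hence $\Lambda^{-1}$ maps $\mathbb{A}_G$ bijectively onto the set of non-negatively-weighted edge-labellings of $G$ (thinking of the off-diagonal part as a vector indexed by $E$), forgetting for the moment the row-sum and balancing constraints.

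Next I would translate the defining constraint of each side. A non-negative edge labelling $(b_{ij})_{i\to j \in E}$ lies in $\mathbb{B}_{\ge 0}$ iff for every vertex $i$, $\sum_{k\in V^+_i} b_{ki} = \sum_{j\in V^-_i} b_{ij}$; the right-hand side is the out-flow from $i$ and the left-hand side is the in-flow to $i$. On the other side, $A = (a_{ij}) \in \widehat{\mathbb{A}}_{G}(w)$ iff $A$ is an ISM with the sparsity pattern of $G$ and $A^\top w = 0$. The key computation is that the $i$-th component of $A^\top w$ equals $\sum_{k} a_{ki} w_k = \sum_{k\neq i} a_{ki} w_k + a_{ii} w_i$, and using that each row of $A$ sums to zero, $a_{ii} = -\sum_{j\neq i} a_{ij}$, so $(A^\top w)_i = \sum_{k\neq i} a_{ki} w_k - w_i \sum_{j\neq i} a_{ij}$. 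Substituting $a_{ki} = w_k b_{ki}$ and $a_{ij} = w_i b_{ij}$ (valid when $A = \Lambda B$, i.e.\ $B = \Lambda^{-1}A$), this becomes $\sum_{k\in V^+_i} w_k^2 b_{ki}/w_k \cdot \dots$—wait, more carefully: with $a_{ki} = w_k b_{ki}$ we get $\sum_{k} a_{ki} w_k = \sum_k w_k^2 b_{ki}$, which is \emph{not} the flow balance I want. So the correct correspondence must instead be $A = \Lambda B$ giving $a_{ij} = w_i b_{ij}$, and then $(A^\top w)_i = \sum_{k\neq i} w_k b_{ki} w_k - w_i \sum_{j} w_i b_{ij}$; this still carries squared weights. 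The resolution is that the intended map sends the \emph{edge weight} $b_{ij}$ to $a_{ij} = w_i b_{ij}$, so $A = \Lambda B$ as full matrices only after fixing diagonals — and then $(A^\top w)_i = \sum_{k: k\to i} a_{ki} w_k + a_{ii}w_i$; we need this zero, i.e.\ $\sum_{k\to i} w_k a_{ki} = -w_i a_{ii} = w_i \sum_{j\to \cdot} a_{ij} = w_i \sum_{i\to j} w_i b_{ij}$. Since $a_{ki} = w_k b_{ki}$, the left side is $\sum_{k\to i} w_k^2 b_{ki}$. These match the balancing equations $\sum_{k\in V^+_i} b_{ki} = \sum_{j\in V^-_i} b_{ij}$ precisely when the off-diagonal correspondence is $a_{ij} = w_i b_{ij}$ \emph{and} we reinterpret: actually the clean statement is that $B \in \mathbb{B}_{\ge 0}$ iff $\Lambda B$ (diagonal-completed to be an ISM) satisfies $(\Lambda B)^\top w = 0$, which by the row-sum cancellation reduces to $\sum_{k\to i} w_k b_{ki} = \sum_{i\to j} w_i b_{ij}$; dividing nothing, this is in-flow-weighted-by-source vs out-flow-weighted-by-source $w_i$, so it is \emph{not} literally the unweighted balance unless one is careful about which weight multiplies which term. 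I would therefore carry out this bookkeeping slowly and check signs; once the algebra is pinned down, set-equality follows immediately from the bijectivity of $\Lambda^{-1}$ and the equivalence of the two constraint systems, including the $V$-sparsity pattern matching.

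Concretely the proof reduces to two inclusions. For $\supseteq$: take $A \in \widehat{\mathbb{A}}_{G}(w)$, set $B := \Lambda^{-1} A$; verify $b_{ij} \ge 0$ (sign preservation since $w_i>0$), verify the zero pattern is that of $G$, and verify the balancing equations by writing out $(A^\top w)_i = 0$, using $a_{ii} = -\sum_{j\neq i} a_{ij}$ to eliminate the diagonal term, and translating into the $b$'s. For $\subseteq$: take $B \in \mathbb{B}_{\ge 0}$, define $A := \Lambda B$ with diagonal entries chosen so each row of $A$ sums to zero; then $A \in \mathbb{A}_G$ by the sign pattern, and $A^\top w = 0$ follows by reversing the same computation, using the balancing identity at each vertex. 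I expect the main obstacle to be nothing more than getting the index and weight bookkeeping exactly right in the chain $(A^\top w)_i = 0 \iff \text{balance at } i$ — in particular being careful that it is the row-sum-zero condition (not column-sum) that lets the diagonal term be absorbed, and that $\Lambda$ multiplies on the correct side so that $a_{ij} = w_i b_{ij}$ makes $(A^\top w)_i$ collapse to the vertex-$i$ flow-balance equation rather than a weighted variant. Everything else is routine linear algebra once the dictionary between the two sides is fixed.
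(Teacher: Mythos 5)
Your overall strategy (eliminate the diagonal of $A$ via the row-sum-zero condition, translate $A^\top w=0$ into the vertex balance equations, and conclude by bijectivity of multiplication by the invertible diagonal $\Lambda$) is exactly the paper's, but the proposal never pins down the one computation on which everything rests, and both of your attempted resolutions of the ``bookkeeping'' are incorrect. The correct dictionary is $b_{ij}=w_i a_{ij}$, i.e.\ $B=\Lambda A$, equivalently $A=\Lambda^{-1}B$: with that substitution, $(A^\top w)_i=\sum_{k\neq i}w_k a_{ki}-w_i\sum_{j\neq i}a_{ij}=\sum_{k\in V^+_i}b_{ki}-\sum_{j\in V^-_i}b_{ij}$, so $A^\top w=0$ is \emph{literally} the unweighted balance condition at every vertex; in matrix form, $A^\top w=(\Lambda^{-1}B)^\top w=B^\top\Lambda^{-1}w=B^\top\mathbf{1}$, and $B^\top\mathbf{1}=0$ (columns of the ISM $B$ sum to zero) is precisely $G$-balancing --- this is the paper's one-line argument. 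Your ``clean statement,'' that $(\Lambda B)^\top w=0$ reduces to $\sum_{k\to i}w_k b_{ki}=\sum_{i\to j}w_i b_{ij}$, is wrong on two counts: the correct expansion carries squared weights, $\sum_{k\to i}w_k^2 b_{ki}=w_i^2\sum_{i\to j}b_{ij}$ (as your own earlier computation showed), and neither identity is the balancing equation. Your final two-inclusion plan then commits to the pairing $B:=\Lambda^{-1}A$ and $A:=\Lambda B$, under which the deferred step ``translating into the $b$'s'' produces exactly the squared-weight identity and therefore fails. A concrete counterexample to that pairing: for the $2$-cycle with $w=(1/3,2/3)$, the matrix $A$ with $a_{12}=2$, $a_{21}=1$ satisfies $A^\top w=0$, but the edge weights of $\Lambda^{-1}A$ are $b_{12}=6$, $b_{21}=3/2$, which are not balanced, whereas those of $\Lambda A$ are.

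Part of the blame lies with the statement itself: as printed, $\mathbb{B}_{\ge 0}=\Lambda^{-1}\widehat{\mathbb{A}}_{G}(w)$ has the inverse on the wrong factor. The identity actually established by the paper's proof (and the one used afterwards, where each agent sets $a_{ij}=b_{ij}/w_i$) is $\widehat{\mathbb{A}}_{G}(w)=\Lambda^{-1}\mathbb{B}_{\ge 0}$, equivalently $\mathbb{B}_{\ge 0}=\Lambda\,\widehat{\mathbb{A}}_{G}(w)$. A complete answer would have detected this misprint, corrected the statement, and then run the short computation above; you sensed that the direction of $\Lambda$ was the crux (the squared weights were the warning sign), but leaving it unresolved and ultimately committing to the direction that does not work is a genuine gap, not mere index bookkeeping.
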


\begin{proof}
Let $\{b_{ij}\mid i\to j\in E \}$ be a non-negative solution of $G$-balancing. This solution gives rise to an  ISM $B$, i.e., if we let $B_{ij}$ be the $ij$-th, $i\neq j$, entry of $B$, then 
\begin{equation*}
B_{ij} := \left\{
\begin{array}{ll}
b_{ij} & \text{if } i\to j \in E\\
0 & \text{otherwise}.
\end{array}
\right. 
\end{equation*} 
Since $\{b_{ij}\mid i\to j\in E\}$ is a non-negative solution of the  $G$-balancing problem, not only is $B$ an ISM, but so is $B^\top$. In other words, we have $B ^\top \mathbf{1} = 0$. Let $\Lambda$ be a diagonal matrix with $ w$ being its diagonal; then the matrix $A:=\Lambda^{-1}B$ is an element in $\widehat{\mathbb{A}}_{G}(w)$ because 
\begin{equation*}
A^\top  w = B^\top\Lambda^{-1}_{ w} w = B^\top \mathbf{1} = 0
\end{equation*}  
Conversely, if $A$ is a matrix in $\widehat{\mathbb{A}}_{G}(w)$, then the matrix $B := \Lambda_{ w} A$ yields a non-negative solution of $G$-balancing. Moreover, this map between $\mathbb{B}_{\ge 0}$ and $\widehat{\mathbb{A}}_{G}(w)$ is one-to-one and onto because the diagonal matrix $\Lambda$ is invertible. 
\end{proof}
\begin{Remark} 
It is known that  $G$ is strongly connected if and only if  there exists a positive solution of $G$-balancing. We also note that for $G$  strongly connected and $\{b_{ij}\mid i\to j\in E\}$  a positive solution of $G$-balancing,  the matrix $A = \Lambda^{-1} B$ is contained in $\mathbb{A}_{G}(w)$.  
\end{Remark}

Now suppose that $G$ is strongly connected, and  $ w$ is in $\operatorname{Sp}(V)$.   Suppose that there is a decentralized algorithm  to find a positive solution $\{b_{ij}\mid i\to j\in E\}$ of $G$-balancing. Then by Lemma~\ref{7lem1}, it suffices for each agent $ x_i$ to set the interaction weights as
\begin{equation*}
a_{ij}: = b_{ij}/w_i,
\end{equation*}
because the resulting set $\{a_{ij}\mid i\to j\in E\}$ yields a matrix $A$ in $\mathbb{A}_{G}(w)$. So for $G$  a strongly connected digraph and $ w$  a vector in  $\operatorname{Sp}(V)$, the problem of finding a matrix  $A\in \mathbb{A}_{G}(w)$ is reduced to the problem of finding a positive solution of $G$-balancing. This is a well-studied problem, and we provide here a decentralized iterative algorithm for agents to find a positive solution of  $G$-balancing

{\bf Algorithm A1: $G$-balancing for $G$ strongly connected}. We let $b_{ij}[l]$ be the value of $b_{ij}$ at iteration step $l\ge 0$. We assume that at every step, the agent $ x_i$ knows the values of $b_{ki}$ for all $k\in V^+_i$ and the value of $b_{ij}$ for all $j\in V^-_i$. 
\\
{\it Initialization}. Each agent $ x_i$ sets $b_{ij}[0] = 1$ for all $j\in V^-_i$. 
\\
{\it Iterative step}. Each agent $ x_i$ updates $b_{ij}[l]$ as 
\begin{equation*}
b_{ij}[l+1] = \frac{1}{2} \left (b_{ij}[l] +  \sum_{k\in V^+_i} b_{ki}[l] / |V^-_i| \right )
\end{equation*} 
We  refer  to \cite{DecAlg} for a proof of convergence of the algorithm. 

We  now    consider the case of $G$  rooted, but not necessarily strongly connected. We assume that  the vector $ w$ is chosen so that $\mathbb{A}_{G}(w)$ is nonempty.   For simplicity, we still assume that only the first $m$ entries of $ w$ are nonzero, and let $ w'$ be in $\mathbb{R}^m$ so that $ w = ( w',0)$.
Let $G'$ be the subgraph of $G$ induced by the first $m$ vertices, and let
\begin{equation*}
\bb{A}_{G'}(w'):= \left \{A' \in \mathbb{A}' \mid A'^\top  w' = 0 \right \}
\end{equation*}
Similarly, we partition an ISM $A$ into four blocks  
\begin{equation*}
A = 
\begin{pmatrix}
A_{11} & A_{12}\\
A_{21} & A_{22}
\end{pmatrix}
\end{equation*}
with $A_{11}$ being $m$-by-$m$. We will now describe a decentralized algorithm for agents to construct the  four block matrices $A_{11},A_{12}, A_{21}$ and $A_{22}$ so that the resulting matrix  $A$ is contained in $\mathbb{A}_{G}(w)$. The algorithm, described below, implements the construction of a specific choice of  matrix $A \in \bb{A}_G(w)$ given in the proof of Theorem~\ref{MAIN}; to be precise,  we want the blocks $A_{ij}$'s to be such that
\begin{itemize}
\item[1.] The block matrix $A_{11}$ is contained in $\mathbb{A}_{G'}(w')$. 
\item[2.] The block matrix $A_{12}$ is a zero matrix.
\item[3.] The two block matrices $A_{21}$ and $A_{22}$ satisfy the condition that if $i\to j$ is an edge of $G$ with $i>m$, then $a_{ij}$, the $ij$-th entry of $A$, is $1$.    
\end{itemize}

{\bf Algorithm A2: Computing $A\in \mathbb{A}_{G}(w)$ for $G$ rooted}. We use an iterative method to derive $a_{ij}$, but  since the graph $G$  is not necessarily strongly connected,  we need to process the input as  described below.
\\
{\it Initialization}.  Each agent $ x_i$ informs his/her neighbors (both incoming and outgoing) of his/her  own weight $w_i$, and agent $ x_i$ receives the information of weights  from all of his/her neighbors. 
There are two different cases depending on whether the weight $w_i$ of agent $ x_i$ is zero or not. 
\begin{itemize}
\item[a).] If $w_i = 0$, then agent $ x_i$ sets $a_{ij}[0] = 1$ for all $j\in V^-_i$.
\item[b).] If $w_i > 0$, then agent $ x_i$ defines the sets 
\begin{equation*}
\left\{
\begin{array}{l}
V'^+_{i} := \left \{k\in V^+_i  \mid w_k > 0  \right \} \\ 
V'^-_{i} := \left \{j\in V^-_i  \mid w_j > 0  \right \}
\end{array}
\right.
\end{equation*} 
We note that $V'^+_i$ (resp. $V'^-_i$) is just the set of incoming (resp. outgoing) neighbors of $i$ in the digraph $G'$. Agent $ x_i$ then sets
\begin{equation*}
a_{ij}[0] = \left\{
\begin{array}{ll}
1/w_i  & \text{if } j\in V'^-_i  \\
0 & \text{otherwise}
\end{array}
\right.
\end{equation*} 
\end{itemize}  
{\it Iterative step}. We still consider two cases:
\begin{itemize}
\item[a).] If $w_i = 0$, then agent $ x_i$ retains the value of $a_{ij}[l]$, i.e., $a_{ij}[l+1] =a_{ij}[l] = 1$ for all $j \in V^-_i$. 
\item[b).] If $w_i>0$, then agent $ x_i$ updates $a_{ij}[l]$ as 
\begin{equation*}
a_{ij}[l+1] = \left\{
\begin{array}{ll}
 \frac{1}{2} \left (a_{ij}[l] + \frac{\sum_{k\in V'^+_i} w_k\cdot a_{ki}[l]}{w_i \cdot |V'^-_i|} \right )  & \text{if } j\in V'^-_i  \\
0 & \text{otherwise}
\end{array}
\right.
\end{equation*} 
In other words, agent $ x_i$ only updates $a_{ij}[l]$ with $j\in V'^-_i$. 
\end{itemize} 

Note that if we replace $a_{ij}[\cdot] $ with $b_{ij}[\cdot]/w_i$, then we actually recover Algorithm A1 and obtain a positive solution of $G'$-balancing.

\section{Conclusions}
In this paper, we have worked with  the standard  consensus model, and addressed the question of given a rooted digraph $G$, what kind of linear objective map
\begin{equation*}
f( x_1,\ldots,  x_n) = \sum^n_{i=1} w_i x_i
\end{equation*}
with $ w = (w_1,\ldots,w_n)\in \operatorname{Sp}[V]$, is feasible by a choice of interaction weights $a_{ij}$? By introducing the notion of relevant subsets of vertices, we have provided a complete answer to this question in Theorem~\ref{MAIN} for the case of continuous-time dynamics, and in Theorem~\ref{main2} for the case of discrete-time dynamics. We   illustrated the results on the particular case of circles and complete graphs. In addition, we have also dealt with implementation of  a feasible objective map $f$.  By looking at cycles of $G$, and introducing the notion of principal subsets, we have shown how the set of feasible objective maps $f$ can be related to a decomposition of the space of stochastic matrices in Propositions~\ref{4lem1},~\ref{mrg}, and~\ref{SC}. Finally,  we also presented  a decentralized algorithm for agents in a network to implement a selected set of interaction weights that achieves a feasible objective map. 

Future work may focus on the case where the interaction weights $a_{ij}$'s are allowed to be negative. Note that in the case when $a_{ij}$'s are non-negative, the vector $ w$ associated with a feasible objective map has to be in the unit simplex. Thus, if $f$ is an objective map with $ w\notin \operatorname{Sp}[V]$, and if there is a choice of $a_{ij}$ under which  $f$ is feasible, then there must exist some $a_{ij}$ which is negative. The question about feasibility, and the question about decentralized implementation can still be  raised in this context for a given digraph $G$. Other open problems, such as dealing with  time-varying digraphs, dealing with nonlinear objective maps, and dealing with the presence of a malicious player who attempts to increase his/her own weight, as in \cite{ET},  are all interesting topics to look at.

\section*{Acknowledgement}
The authors acknowledge useful discussions with Ji Liu at CSL, UIUC.

\section*{Appendix}
We prove here Lemma~\ref{parlem1} stated in section~\ref{3c}. 

\begin{proof}[Proof of Lemma~\ref{parlem1}]
We prove Lemma~\ref{parlem1} case by case. 

\subsubsection{Proof for the case where $L(\cal{G}_i) = L(\cal{G}_j)$}
We first show that $\Co(\cal{G}_i)\supseteq \Co(\cal{G}_j)$, then show that  $\Co(\cal{G}_i) \neq \Co(\cal{G}_j)$.
Assume that $\cal{G}_i  = \{G_{1},\ldots,G_{m}\} $ and $\cal{G}_j = \{G_{1},\ldots, G_{m'}\}$ with $m> m'$. 
Since $L(\cal{G}_i) = L(\cal{G}_j)$, for each $G_s$ with $s> m'$,  there are coefficients $\sigma_{st}$'s such that
\begin{equation}\label{eq:coeffst}
C_s = \sum^{m'}_{t=1} \sigma_{st} C_t 
\end{equation}
Now choose a matrix
\begin{equation*}
A = \sum^{m'}_{t=1} \alpha_t C_t \in \Co(\cal{G}_j)
\end{equation*}
with each $\alpha_t>0$, and construct positive coefficients $\beta_{s}$'s such that
\begin{equation*}
A = \sum^{m}_{s=1} \beta_s C_s \in \Co(\cal{G}_i)
\end{equation*}
Choose an $\epsilon>0$, and define
\begin{equation*}
\beta_s :=
\left\{
\begin{array}{ll}
\alpha_s - \epsilon \sum^m_{t=m'+1} \sigma_{ts} & \mbox{if } 1\le s\le m'\\
\epsilon & \mbox{if } s > m'.
\end{array}
\right.
\end{equation*}
Since $\alpha_s>0$, we can choose $\epsilon$ sufficiently small so that $\beta_s>0$ for all $s=1,\ldots,m$. With this choice of $\beta_s$'s, we have
\begin{equation*}
\sum^{m}_{s=1} \beta_s C_s  =  \sum^{m'}_{s=1} (\alpha_s - \epsilon \sum^m_{t=m'+1} \sigma_{ts}) C_s + \epsilon \sum^m_{s = m'+1} C_s 
\end{equation*}
After rearranging the terms, we obtain 
\begin{equation*}
\sum^{m}_{s=1} \beta_s C_s = \sum^{m'}_{s=1} \alpha_s C_s  + \epsilon \sum^m_{s = m'+1}(C_s - \sum^{m'}_{t=1} \sigma_{st} C_t )
\end{equation*}
By~\eqref{eq:coeffst}, we have 
$$
A = \sum^{m'}_{s=1} \alpha_s C_s = \sum^{m}_{s=1} \beta_s C_s\in  \Co(\cal{G}_i).
$$

Next we show that there exists a matrix $A \in \Co(\cal{G}_i) - \Co(\cal{G}_j)$. Consider the cycle $G_m\in \cal{G}_i - \cal{G}_j$. By Proposition~\ref{CH},  $\{\alpha C_m \mid\alpha>0\}$ is an extreme ray of $\widehat{\bb{A}}_{G}(w)$, and hence,  there does not exist a set of non-negative coefficients $\alpha_i$'s such that  
$
C_m = \sum^{m'}_{i=1} \alpha_i C_i
$. 
This, in particular, implies that
\begin{equation*}
\inf\left \{ \| C_m - C\| \mid C\in \Co[\cal{G}_j] \right\} >0
\end{equation*}
where $\|C_m - C\|$ is the trace of $(C_m - C)^{\top}(C_m - C)$.  In other words, the matrix $C_m$ and the convex cone $\Co[\cal{G}_j]$ are separable. 
Thus, if we choose
\begin{equation*}
A := C_m + \epsilon \sum^{m-1}_{i=1}C_i \in  \Co(\cal{G}_i)
\end{equation*}
for sufficiently small $\epsilon>0$, then $A\notin \Co[\cal{G}_j]$. This completes the proof.


\subsubsection{Proof for the case where $L(\cal{G}_i) \supsetneq L(\cal{G}_j)$ and $\Co(\cal{G}') \cap \Co(\cal{G}_j) \neq \emptyset$} 
We first show that $\Co(\cal{G}_i) \supseteq \Co(\cal{G}_j)$, and then show that $\Co(\cal{G}_i) \neq \Co(\cal{G}_j)$. Let $\cal{G}_i  = \{G_{1},\ldots,G_{m}\} $ and $\cal{G}_j = \{G_{1},\ldots, G_{m'}\}$ with $m> m'$. Choose a matrix 
\begin{equation*}
A = \sum^{m'}_{t=1} \alpha_t C_t \in \Co(\cal{G}_j)
\end{equation*}
and construct positive coefficients $\beta_{s}$'s such that
\begin{equation*}
A = \sum^{m}_{s=1} \beta_s C_s \in \Co(\cal{G}_i)
\end{equation*}
Since $\Co(\cal{G}')$ intersects $L(\cal{G}_j)$, there are positive coefficients $\tilde \beta_s$'s and coefficients $\gamma_t$'s such that 
\begin{equation}\label{eq:coeff2}
\sum^m_{s = m'+1} \tilde\beta_s C_s = \sum^{m'}_{t = 1} \gamma_t C_t \in L(\cal{G}_j)
\end{equation}
Now choose a positive number $\epsilon$, and define 
\begin{equation*}
\beta_s := \left\{
\begin{array}{ll}
\alpha_s - \epsilon \gamma_s & \text{if } 1\le s\le m'\\
\epsilon \tilde \beta_s & \text{if } s>m'
\end{array}
\right. 
\end{equation*}
Then, for sufficiently small $\epsilon$, we have $\beta_s > 0$ for all $s=1,\ldots, m$. By the choice of $\beta_s$'s, we have
$$
\sum^{m}_{s=1} \beta_s C_s = \sum^{m'}_{s=1} (\alpha_s - \epsilon \gamma_s) C_s + \sum^m_{s = m'+1} \tilde\beta_s.
$$
After re-arranging the terms, we have 
\begin{equation*}
\sum^{m}_{s=1} \beta_s C_s = \sum^{m'}_{s=1} \alpha_s C_s + \epsilon\left (\sum^m_{s = m'+1} \tilde\beta_s C_s - \sum^{m'}_{s = 1} \gamma_s C_s \right )
\end{equation*}
By~\eqref{eq:coeff2}, we have
\begin{equation*}
A = \sum^{m'}_{s=1} \alpha_s C_s =  \sum^{m}_{s=1} \beta_s C_s \in  \Co(\cal{G}_i).
\end{equation*}

Next, we show that there exists a matrix $A\in \Co(\cal{G}_i) - \Co(\cal{G}_j)$. Note that $\Co(\cal{G}_i)$ is an open set in $L(\cal{G}_i)$ and, by assumption, $\dim L(\cal{G}_i)>\dim L(\cal{G}_j)$. Thus, there must exist a matrix $A\in \Co(\cal{G}_i) - L(\cal{G}_j)$ which implies that  $A\in\Co(\cal{G}_i) - \Co(\cal{G}_j) $.

\subsubsection{Proof for the case where $L(\cal{G}_i) \supsetneq L(\cal{G}_j)$ and $\Co(\cal{G}') \cap \Co(\cal{G}_j) = \emptyset$}
We prove this case by contradiction. Assume that there exists  a matrix $A$ such that 
\begin{equation*}
A  = \sum^{m'}_{t=1} \alpha_t C_t = \sum^m_{s =1} \beta_s C_s \in \Co(\cal{G}_i) \cap \Co(\cal{G}_j)
\end{equation*} 
for some positive coefficients $\alpha_t$'s and $\beta_s$'s. Then, 
\begin{equation*}
\sum^m_{s = m'+1} \beta_s C_s = \sum^{m'}_{s = 1} (\alpha_t - \beta_t) C_t \in \Co(\cal{G}') \cap L(\cal{G}_j)  
\end{equation*}
which is a contradiction. This completes the proof. 
\end{proof}

\end{document}